\newcommand{\cA}{{\mathcal A}}
\newcommand{\cD}{{\mathcal D}}
\newcommand{\cE}{{\mathcal E}}
\newcommand{\cH}{{\mathcal H}}
\newcommand{\cL}{{\mathcal L}}
\newcommand{\bN}{{\mathbb N}}
\newcommand{\bZ}{{\mathbb Z}}
\newcommand{\bQ}{{\mathbb Q}}
\newcommand{\bR}{{\mathbb R}}
\newcommand{\bC}{{\mathbb C}}
\newcommand{\ind}{\mathop{\rm ind}\nolimits}
\newtheorem{theorem}{Theorem}[section]
\newtheorem{lemma}{Lemma}[section]
\newtheorem{corollary}{Corollary}[section]
\newtheorem{definition}{Definition}[section]
\newtheorem{remark}{Remark}[section]
\newtheorem{proposition}{Proposition}[section]
\numberwithin{equation}{section}
\begin{document}

\title[Asymptotics of $p$-adic singular Fourier integrals]
{Asymptotical behavior of one class of $p$-adic singular Fourier integrals}

\author{A.~Yu.~Khrennikov}
\address{International Center for Mathematical Modelling in Physics
and Cognitive Sciences MSI, V\"axj\"o University,
SE-351 95, V\"axj\"o, Sweden.}
\email{andrei.khrennikov@msi.vxu.se}

\author{V.~M.~Shelkovich}
\address{Department of Mathematics, St.-Petersburg State Architecture
and Civil Engineering University, 2 Krasnoarmeiskaya 4, 190005,
St. Petersburg, Russia.}
\email{shelkv@vs1567.spb.edu}

\thanks{The second author (V.~S.) was supported in part by
DFG Project 436 RUS 113/809, DFG Project 436 RUS 113/951, and
Grant 05-01-04002-NNIOa of Russian Foundation for Basic Research.}

\subjclass[2000]{Primary 11E95, 34E05; Secondary 76M45, 46F12.}

\date{}

\keywords{$p$-adic singular Fourier integrals; $p$-adic quasi associated
homogeneous distributions; $p$-adic distributional asymptotics.}

\begin{abstract}
We study the asymptotical behavior of the $p$-adic singular Fourier integrals
$$
J_{\pi_{\alpha},m;\varphi}(t)
=\bigl\langle f_{\pi_{\alpha};m}(x)\chi_p(xt), \varphi(x)\bigr\rangle
=F\big[f_{\pi_{\alpha};m}\varphi\big](t), \quad |t|_p \to \infty,
\quad t\in \bQ_p,
$$
where $f_{\pi_{\alpha};m}\in {\cD}'(\bQ_p)$ is a {\em quasi associated
homogeneous} distribution (generalized function) of degree
$\pi_{\alpha}(x)=|x|_p^{\alpha-1}\pi_1(x)$ and order $m$, \
$\pi_{\alpha}(x)$, $\pi_1(x)$, and $\chi_p(x)$ are a
multiplicative, a normed multiplicative, and an additive
characters of the field $\bQ_p$ of $p$-adic numbers, respectively,
$\varphi \in {\cD}(\bQ_p)$ is a test function, $m=0,1,2\dots$,
$\alpha\in \bC$. If $Re\alpha>0$ the constructed asymptotics
constitute a $p$-adic version of the well known Erd\'elyi lemma.
Theorems which give asymptotic expansions of singular Fourier integrals
are the Abelian type theorems. In contrast to the real case,
all constructed asymptotics have the {\it stabilization\/} property.
\end{abstract}

\maketitle

\setcounter{equation}{0}

\section{Introduction}
\label{s1}

\subsection{$p$-Adic mathematical physics.}\label{s1.1}
According to the well-known Ostrovsky theorem, {\it any nontrivial
valuation on the field of the rational numbers $\bQ$ is equivalent
either to the real valuation $|\cdot|$ or to one of the $p$-adic
valuations $|\cdot|_p$}, where $p$ is any prime number.
This $p$-adic norm $|\cdot|_p$ is defined as follows:
if an arbitrary rational number $x\ne 0$ is represented as
$x=p^{\gamma}\frac{m}{n}$, where $\gamma=\gamma(x)\in \bZ$ and
the integers $m$, $n$ are not divisible by $p$, then
$$
|x|_p=p^{-\gamma}, \quad x\ne 0, \qquad |0|_p=0.
$$
The norm $|\cdot|_p$ satisfies the {\em strong triangle inequality}
\begin{equation}
\label{1.0}
|x+y|_p\le \max(|x|_p,|y|_p)
\end{equation}
and is non-Archimedean.
Consequently, it is possible to construct a completion of $\bQ$
only with respect to the real valuation $|\cdot|$ or to one of
the $p$-adic valuations $|\cdot|_p$.
The field $\bQ_p$ of $p$-adic numbers is defined as the
completion of the field of rational numbers $\bQ$ with respect
to the norm $|\cdot|_p$.

Thus there are two equal in rights universes: the ``real universe''
and the ``$p$-adic one''. The latter universe is non-Archimedean, and
in consequence of this has some specific and surprising properties.
This leads to interesting deviations from the classical ``real universe''.

As is well known, during a few hundred years theoretical physics has been
developed on the basis of real (and later also complex) numbers. However,
in the last 20 years the field of $p$-adic numbers ${\bQ}_p$ (as well as its
algebraic extensions) has been intensively used in theoretical and mathematical
physics, stochastics, psychology, cognitive and social sciences, biology,
image analysis (see~\cite{Av-Bik-Koz-O},~\cite{Kh1}--~\cite{Kh3},~\cite{Koch},
~\cite{Vl-V-Z}--~\cite{V2} and the references therein).
Thus, notwithstanding the fact that the $p$-adic numbers were discovered
by K.~Hensel around the end of the nineteenth century, the theory of
$p$-adic numbers has already penetrated into several areas of mathematics
and applied researches.

Since $p$-adic analysis and $p$-adic mathematical physics are young areas
there are many unsolved problems,  which have been solved in standard real
setting. Since ``$p$-adic universe'' is in a sense dual to the
``real universe'', solving such type of problems is important.

Recall that in the usual ($\bR$) case there is a theory of
so-called {\it oscillating integrals}, which have the form
$\int_{\bR^n}e^{itf(x)}\varphi(x)\,d^nx$. These integrals frequently
occur in applied and mathematical physics. The classical problem
related to {\it oscillating integrals\/} is to investigate their
asymptotical behavior when the parameter $t$ tends to
infinity~\cite{Arn-G-Var2},~\cite{Fed},~\cite[7.8]{Hor}. In
particular, there are many problems where solutions are obtained
as Fourier integrals which can not be evaluated exactly.
Nevertheless, these solutions are not less important because it is
often possible to study the asymptotic behavior of these
integrals~\cite[9]{J}.

The problem of the asymptotical behavior of the {\it Fourier integrals\/}
is related to the well known Erd\'elyi lemma~\cite{Erdelyi-1},~\cite{Erdelyi-2}.
In the one-dimensional case this lemma describes the asymptotics of the
Fourier transforms of functions $f(x)$ defined on $\bR$ and having
singularities of the type
$x_{\pm}^{\alpha-1}\log^mx_{\pm}\varphi(x)$, where $\alpha>0$ and
$\varphi(x)$ is sufficiently smooth~\cite[Ch.III,\S1]{Fed}. There
are multidimensional generalizations of this lemma~\cite[Ch.III]{Fed},
~\cite[Ch.II,\S7]{Arn-G-Var2},~\cite{Zasl}.

The above-mentioned problems are close to the problem of
constructing asymptotics of the Fourier transform of distributions
$$
(x\pm i0)^{\alpha-1}\log^m(x\pm i0)\varphi(x), \quad m=0,1,2\dots,
\quad \alpha\in \bC,
$$
where $\varphi(x)\in {\cD}(\bR)$ and $(x\pm i0)^{\alpha-1}\log^m(x\pm i0)$
are {\it quasi associated homogeneous distributions\/} of degree $\alpha-1$ and
order $m$ (see~\cite[Ch.I,\S 4.]{Ge-Sh},  and Remark~\ref{rem2}). These
asymptotics were constructed in~\cite{Brich-Shirok1},
~\cite{Brich-Shirok2},~\cite{Brich1} (see also~\cite[Ch.III,\S1.6.,\S8]{Fed}).
In these papers the following asymptotical formulas were derived:
$$
(x\pm i0)^{\alpha-1}\log^m(x\pm i0)e^{itx}
\qquad\qquad\qquad\qquad\qquad\qquad\qquad\qquad
$$
\begin{equation}
\label{2.1}
\approx
\delta(x)2\pi\sum_{k=0}^m\frac{d^{m-k}}{d\alpha^{m-k}}
\Big(\frac{e^{\pm i\frac{\pi(\alpha-1)}{2}}}{\Gamma(-\alpha+1)}\Big)
\frac{\log^{k}|t|}{|t|^{\alpha}}, \quad t\to \mp\infty,
\end{equation}
\begin{equation}
\label{2.2}
(x\pm i0)^{\alpha-1}\log^m(x\pm i0)e^{itx}=o(|t|^{-N}),
\quad t\to \pm\infty, \qquad\qquad
\end{equation}
for any $N\in \bN$, where $\alpha \not\in \bN$. Some particular
cases of these formulas were studied in~\cite[9]{J}.
In~\cite{Seller}, the asymptotic behavior of singular Fourier integrals
of pseudo-functions having power and logarithmic singularities are studied.

In $p$-adic analysis the above-mentioned problems have not been studied so far.
However, taking into account that $p$-adic mathematical physics is intensively
developed, studying these type of problems in $p$-adic setting is very
important.

\subsection{Contents of the paper.}\label{s1.2}
In this paper the asymptotical behavior of the {\em $p$-adic singular Fourier integrals}
\begin{equation}
\label{3}
J_{\pi_{\alpha},m;\varphi}(t)
=\bigl\langle f_{\pi_{\alpha};m}(x)\chi_p(xt),\varphi(x)\bigr\rangle
=F\big[f_{\pi_{\alpha};m}\varphi\big](t), \quad |t|_p \to \infty,
\end{equation}
is studied, where $f_{\pi_{\alpha};m}\in {\cD}'(\bQ_p)$ is a
{\em quasi associated homogeneous distribution} of degree
$\pi_{\alpha}(x)=|x|_p^{\alpha-1}\pi_1(x)$ and order $m$, \
$m=0,1,2\dots$, $\alpha\in \bC$ (see Definitions~\ref{de1},~\ref{de1.1}
and Theorem~\ref{th1.2}); $\varphi \in {\cD}(\bQ_p)$; $F$ is the Fourier
transform; $\pi_{\alpha}(x)$, $\pi_1(x)$, and $\chi_p(x)$ are a
multiplicative (\ref{16}), a normed multiplicative (\ref{16.1}),
and an additive characters of the field $\bQ_p$, respectively.

\begin{remark}
\label{rem1} \rm
(i) Let us note that the linear span of set of distributions mentioned above
$$
{\cA\cH}_0(\bR)={\rm span}\big\{(x\pm i0)^{\alpha}\log^m(x\pm i0):\, \alpha\in\bC, \, m\in \bN_0\big\}
\qquad\qquad
$$
$$
={\rm span}\big\{x_{\pm}^{\alpha}\log^k x_{\pm}, \ P\big(x_{\pm}^{-n}\log^{m-1}x_{\pm}\big):
\, \alpha\in\bC,
\quad
$$
$$
\qquad\qquad
\alpha \ne -1,-2,\dots,-n,\dots;
\,\, n, m \in \bN, \, k\in \bN_0 \big\}\subset \cD'(\bR)
$$
constitutes a class important for application in mathematical physics,
$\bN_0=\{0\}\cup\bN$.
Recall that this class was first introduced and studied in the
book~\cite[Ch.I,\S 4.]{Ge-Sh} as a class of the so-called
{\em associated homogeneous distributions}.
Later {\em associated homogeneous distributions} were studied in the
book~\cite{E-K2}.
Unfortunately, results on {\em associated homogeneous distribution} from
the books~\cite{Ge-Sh},~\cite{E-K2} are not quite
consistent and have self-contradictory (for details, see~\cite{Sh1}). The problems
of introducing of the concept of {\em associated homogeneous distribution}
for $\cD'(\bR^n)$ and relating mathematics were studied in~\cite{Sh1}.
According to~\cite{Sh1}, {\it direct transfer} of the notion of an
{\it associated eigenvector} to the case of distributions
{\it is impossible for the order $m\ge 2$}. Thus there exist only
{\em associated homogeneous distributions} of order $m=0$, i.e.,
{\em homogeneous distributions} (see Definition~\cite[Ch.I,\S 3.11.,(1)]{Ge-Sh},
~\cite[3.2.]{Hor}) and of order $m=1$ (see Definition~\cite[Ch.I,\S4.1.,(1),(2)]{Ge-Sh}).
Moreover, in~\cite{Sh1},a definition of {\it quasi associated homogeneous
distribution} which is a natural generalization of the notion of
{\it associated homogeneous distribution} was introduced and a mathematical
description of all quasi associated homogeneous distributions was given.
It was proved in~\cite{Sh1} that the class of {\it quasi associated homogeneous
distributions} coincides with the class of distributions ${\cA\cH}_0(\bR)$
introduced in~\cite[Ch.I,\S 4.]{Ge-Sh} as the class of {\it associated
homogeneous distributions}.

By adaptation of definitions from~\cite{Sh1} to the case of the field $\bQ_p$
(instead of the real field), a notion of the {\it $p$-adic quasi associated
homogeneous distribution} was introduced in~\cite{Al-Kh-Sh1},~\cite{Al-Kh-Sh2}
by Definition~\ref{de1.1}. (In~\cite{Al-Kh-Sh1},~\cite{Al-Kh-Sh2} these new
distributions were named as {\em associated homogeneous distributions}).
In~\cite{Al-Kh-Sh1},~\cite{Al-Kh-Sh2} a mathematical description of all
{\it $p$-adic quasi associated homogeneous distributions $f_{\pi_{\alpha};m}$}
was given (see Theorem~\ref{th1.2} and formulas (\ref{19.3}), (\ref{19.5})).

(ii) Note that associated homogeneous
distributions from ${\cD}'(\bR)$ are parametrized by $\alpha\in
\bC$ and $m\in \bN_0$, while associated homogeneous distributions
from ${\cD}'(\bQ_p)$ are parametrized by $\alpha\in \bC$,
$\pi_1(x)$, and $m\in \bN_0$ (cf.~\cite[Definition~3.3.]{Sh1} and
Definition~\ref{de1.1}).
\end{remark}

In Sec.~\ref{s1*}, some facts from $p$-adic theory of distributions
are presented. In particular, the results on {\em quasi associated
homogeneous distributions}~\cite{Al-Kh-Sh1},~\cite{Al-Kh-Sh2} are given
in Subsec.~\ref{s1*.3}.
In Sec.~\ref{s1**}, a definition of the {\it stable\/} $p$-adic asymptotical
expansion is introduced. This concept is relevant for $p$-adic asymptotic
analysis (see below).
In Sec.~\ref{s2},~\ref{s3}, we prove Theorems~\ref{th2-1},~\ref{th2-2},~\ref{th3}
which describe the asymptotical behavior of the $p$-adic singular Fourier
integrals (\ref{3}). Here the sequence
$$
\{|t|_p^{-\alpha}\pi_{1}^{-1}(t)\log_p^{m-k}|t|_p:k=0,1,2,\dots,m\}, \quad |t|_p \to \infty
$$
is an {\em asymptotic sequence}, and any coefficient of the asymptotic
expansion is proportional to the Dirac delta function. 
Here we note that, although statement
of Theorem~\ref{th2-1} can be obtained as a corollary of
Theorem~\ref{th3}, we prove these theorems separately to
demonstrate different methods for calculating $p$-adic
asymptotics. In Sec.~\ref{s4}, by Corollary~\ref{cor2} a
$p$-adic version of the well-known Erd\'elyi lemma is given. This
lemma is a direct consequence of Theorems~\ref{th2-1},~\ref{th3}
for the case $Re\alpha>0$. In Sec.~\ref{s5}, auxiliary lemmas are proved.

The asymptotical formulas (\ref{49.0})--(\ref{49.1}), (\ref{50*})--(\ref{50.1}),
(\ref{65})--(\ref{65.3}) obtained by Theorems~\ref{th2-1},~\ref{th2-2},~\ref{th3}
are $p$-adic analogs of formulas (\ref{2.1}), (\ref{2.2}).
However, in contrast to (\ref{2.1}), (\ref{2.2}), $p$-adic asymptotical formulas
(\ref{49.0})--(\ref{49.1}), (\ref{50*})--(\ref{50.1}), (\ref{65})--(\ref{65.3})
have a {\it specific property of the stabilization\/}. Namely, the left- and right-hand
sides of these formulas are {\it exact equalities\/} for sufficiently big
$|t|_p>s(\varphi)$, where $s(\varphi)$ is the stabilization parameter (see
Definition~\ref{de4} and Remark~\ref{rem2} ).
The stabilization parameter $s(\varphi)$ depends on the
{\it parameter of constancy} of the function $\varphi$ (see (\ref{9.3*}))
and the rank of the character $\pi_1(x)$ (see (\ref{17})).
Asymptotics of this type we call {\it stable\/} asymptotical expansions
(see Definitions~\ref{de2}--~\ref{de4}). This $p$-adic phenomenon is quite
different from the ``real asymptotic properties''. It was first discovered
in our paper~\cite[Theorem~5.1]{Al-Kh-Sh5}, where some weak asymptotics
were calculated.

This {\em asymptotic stabilization property\/} is similar
to another $p$-adic phenomenon: {\em if $\lim_{n\to\infty}x_n=x$, $x_n,x\in \bQ_p$,
$|x|_p\ne 0$, then $\lim_{n\to\infty}|x_n|_p=|x|_p$ and the sequence of norm
$\{|x_n|_p:n\in\bN\}$ must be {\em stabilize} for sufficiently large $n$}.
Indeed, since $|x_n-x|_p<|x|_p$ for sufficiently large $n$, according to
the {\em strong triangle inequality} (\ref{1.0}), we have
$$
|x_n|_p=|(x_n-x)+x|_p=\max(|x_n-x|_p,|x|_p)=|x|_p
\quad \text{for sufficiently large $n$}.
$$

It may well be that {\it stabilization\/} is a typical property of
$p$-adic asymptotics.

It remains to note that Theorems~\ref{th2-1},~\ref{th2-2},~\ref{th3}
are the {\em Abelian type theorems}. Theorems of this type are inverse
to the Tauberian theorems (see~\cite{Vl-D-Zav} and the references therein).
For the $p$-adic case Tauberian theorems for distributions were first
proved in~\cite{Al-Kh-Sh6},~\cite{Kh-Sh2}. In this paper we study the
asymptotical behavior of the singular Fourier integrals
$J_{\pi_{\alpha},m;\varphi}(t)=F[g(x)](t)$, where the functions
$g(x)=|x|_p^{\alpha-1}\log_p^{m}|x|_p\pi_{1}(x)\varphi(x)$ admit
the estimate $g(x)=O(|x|_p^{\alpha-1}\log_p^{m}|x|_p)$, $|x|_p \to 0$.
If $\alpha \ne 0$, according to Theorems~\ref{th2-1},~\ref{th3}, we have
$$
J_{\pi_{\alpha},m;\varphi}(t)=O\Big(|t|_p^{-\alpha}\log_p^{m}|t|_p\Big),
\quad |t|_p \to \infty.
$$
This connection between asymptotical behavior of $g(x)$ and
$J_{\pi_{\alpha},m;\varphi}(t)$ is a typical Abelian type theorem.

The results of this paper allow a development of an area of $p$-adic
harmonic analysis which has not been studied so far. In addition, a
new technique of constructing $p$-adic weak asymptotics is developed.
Moreover, a new effect of the $p$-adic {\em asymptotic stabilization} is observed.

Since the asymptotical formulas for the Fourier transform of
quasi associated homogeneous distributions from ${\cD}'(\bR)$ have many
applications (see, for example~\cite{Brich-Shirok2},
 \cite{Shirok1},~\cite{Zasl}), we hope that their $p$-adic
versions may be also useful in the $p$-adic mathematical physics.

\section{Preliminary results in $p$-adic analysis}
\label{s1*}

\subsection{$p$-Adic functions and distributions.}\label{s1*.2}
We shall use intensively the notations and results from~\cite{Vl-V-Z}.
Denote by $\bN$, $\bZ$, $\bC$ the sets of positive integers,
integers, complex numbers, respectively, and set
${\bN}_0={0}\cup{\bN}$. Denote by $\bQ_p^{*}=\bQ_p\setminus\{0\}$
the multiplicative group of the field $\bQ_p$.

Denote by $B_{\gamma}(a)=\{x\in\bQ_p:|x-a|_p\le p^{\gamma}\}$ the ball
of radius $p^{\gamma}$ with center at a point $a\in \bQ_p$ and by
$S_{\gamma}(a)=\{x\in\bQ_p:|x-a|_p=p^{\gamma}\}
=B_{\gamma}(a)\setminus B_{\gamma-1}(a)$ its boundary (sphere),
$\gamma \in \bZ$. For $a=0$ we set $B_{\gamma}(0)=B_{\gamma}$ and
$S_{\gamma}(0)=S_{\gamma}$.

On $\bQ_p$ one can define the Haar measure, i.e., a positive
measure $dx$ which is invariant with respect to shifts,
$d(x+a)=dx$, and normalized by the equality $\int_{|\xi|_p\le 1}\,dx=1$.

A complex-valued function $f$ defined on $\bQ_p$ is called {\it locally-constant}
if for any $x\in \bQ_p$ there exists an integer $l(x)\in \bZ$ such that
$$
f(x+x')=f(x), \quad x'\in B_{l(x)}.
$$

Let ${\cE}(\bQ_p)$ and ${\cD}(\bQ_p)$ be the linear spaces of
locally-constant $\bC$-valued functions on $\bQ_p$ and locally-constant
$\bC$-valued functions with compact supports (so-called test functions),
respectively. According to Lemma~1 from~\cite[VI.1.]{Vl-V-Z}, for any
$\varphi \in {\cD}(\bQ_p)$ there exists $l\in \bZ$, such that
\begin{equation}
\label{9.3*}
\varphi(x+x')=\varphi(x), \quad x'\in B_l, \quad x\in \bQ_p.
\end{equation}
The largest number $l=l(\varphi)$ for which the last relation
holds is called the {\it parameter of constancy} of the function $\varphi$.
Let us denote by ${\cD}^l_N(\bQ_p)$ the space of test
functions from ${\cD}(\bQ_p)$ with supports in the disc $B_N$ and with
parameter of constancy $\ge l$. The following embedding holds:
${\cD}^l_N(\bQ_p) \subset {\cD}^{l'}_{N'}(\bQ_p)$, \ $N\le N'$, $l\ge l'$.
Here ${\cD}(\bQ_p)=\lim\limits_{N\to \infty}\ind{\cD}_N(\bQ_p)$,
${\cD}_N(\bQ_p)=\lim\limits_{l\to -\infty}\ind{\cD}_N^l(\bQ_p)$.
Denote by ${\cD}'(\bQ_p)$ the set of all
linear functionals on ${\cD}(\bQ_p)$.

Denote by $\Delta_k(x)\stackrel{def}{=}\Omega(p^{-k}|x|_p)$
the characteristic function of the ball $B_{k}$, $k \in \bZ$, \
$x\in \bQ_p$, where
$$
\Omega(t)=\left\{
\begin{array}{rclrcl}
\displaystyle
&1,& \quad 0 \le &t& \le 1, \\
\displaystyle
&0,& \quad  &t& > 1. \\
\end{array}
\right.
$$

If $f\in{\cD}'(\bQ_p)$, $\varphi\in {\cD}(\bQ_p)$ then the
convolution $f*\varphi\in {\cE}(\bQ_p)$ and~\cite[VII,(1.7)]{Vl-V-Z}
\begin{equation}
\label{11.1}
(f*\varphi)(x)=\langle f(\xi),\varphi(x-\xi)\rangle.
\end{equation}

The Fourier transform of a test function $\varphi\in {\cD}(\bQ_p)$
is defined by the formula
$$
F[\varphi](\xi)=\int \chi_p(\xi x)\varphi(x)\,dx, \quad \xi \in \bQ_p,
$$
where the function $\chi_p(\xi x)=e^{2\pi i\{\xi x\}_p}$ for every
fixed $\xi \in \bQ_p$ is an {\it additive character\/} of the
field $\bQ_p$, \ $\{\xi x\}_p$ is the fractional part of the
number $\xi x$~\cite[VII.2.,3.]{Vl-V-Z}.
\begin{lemma}
\label{lem1}
{\rm(}{\rm~\cite[III,(3.2)]{Taib3},~\cite[VII.2.]{Vl-V-Z}}{\rm)}
Fourier transform is a linear isomorphism ${\cD}(\bQ_p)$ into ${\cD}(\bQ_p)$.
Moreover,
\begin{equation}
\label{12.1}
\varphi \in {\cD}^l_N(\bQ_p) \quad \text{iff}
\quad
F[\varphi] \in {\cD}^{-N}_{-l}(\bQ_p).
\end{equation}
\end{lemma}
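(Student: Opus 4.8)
The plan is to read off the equivalence (\ref{12.1}) directly from the integral defining $F[\varphi]$, and then to obtain the isomorphism statement from Fourier inversion. Everything rests on a single elementary fact about the additive character: $\chi_p(y)=1$ if and only if $|y|_p\le 1$, since $\{y\}_p=0$ exactly in that range. I also use the translation invariance $d(x+x')=dx$ of the Haar measure and the multiplicativity $\chi_p(a+b)=\chi_p(a)\chi_p(b)$.

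First I would prove the implication $\varphi\in\cD^l_N(\bQ_p)\Rightarrow F[\varphi]\in\cD^{-N}_{-l}(\bQ_p)$. To locate the support of $F[\varphi]$, fix $x'\in B_l$; then periodicity of $\varphi$ together with invariance of the measure gives
$$
F[\varphi](\xi)=\int\chi_p(\xi(x+x'))\varphi(x+x')\,dx
=\chi_p(\xi x')F[\varphi](\xi),
$$
so $(1-\chi_p(\xi x'))F[\varphi](\xi)=0$ for all $x'\in B_l$. If $|\xi|_p>p^{-l}$, then choosing $x'$ with $|x'|_p=p^l$ yields $|\xi x'|_p=|\xi|_p\,p^l>1$, hence $\chi_p(\xi x')\ne1$ and $F[\varphi](\xi)=0$; thus $\supp F[\varphi]\subseteq B_{-l}$. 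To bound the parameter of constancy, fix $\xi'\in B_{-N}$ and write
$$
F[\varphi](\xi+\xi')=\int\chi_p(\xi x)\chi_p(\xi'x)\varphi(x)\,dx.
$$
For $x\in\supp\varphi\subseteq B_N$ we have $|\xi'x|_p\le p^{-N}p^{N}=1$, so $\chi_p(\xi'x)=1$ on the support and $F[\varphi](\xi+\xi')=F[\varphi](\xi)$; hence the parameter of constancy of $F[\varphi]$ is $\ge-N$. Together these give $F[\varphi]\in\cD^{-N}_{-l}(\bQ_p)$.

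The converse implication uses no new idea, since the inverse Fourier transform has the same form as $F$ up to the sign of the character: applying the two estimates above to $F[\varphi]\in\cD^{-N}_{-l}(\bQ_p)$ returns $\varphi\in\cD^l_N(\bQ_p)$. Finally, because $\cD(\bQ_p)=\bigcup_{N,l}\cD^l_N(\bQ_p)$, the equivalence (\ref{12.1}) shows that $F$ maps $\cD(\bQ_p)$ into itself, and Fourier inversion supplies a two-sided inverse of the same type, so $F$ is a linear isomorphism of $\cD(\bQ_p)$ onto itself. I expect the only delicate point to be the support direction of the first implication, where one must correctly produce an $x'\in B_l$ making $\chi_p(\xi x')$ nontrivial; the remaining steps are routine applications of the triviality criterion for $\chi_p$ and of the invariance of the Haar measure.
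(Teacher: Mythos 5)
Your proof is correct. The paper itself contains no proof of Lemma~\ref{lem1} --- it is quoted directly from \cite{Taib3} and \cite{Vl-V-Z} --- and your argument is precisely the standard one found in those references: the translation/local-constancy trick with an $x'$ of norm $p^l$ gives $\supp F[\varphi]\subseteq B_{-l}$, the criterion $\chi_p(y)=1$ iff $|y|_p\le 1$ gives the parameter of constancy $\ge -N$, and Fourier inversion (applied to the transform of opposite sign, which satisfies the same two estimates) yields both the converse implication and the isomorphism claim.
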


We define the Fourier transform $F[f]$ of a distribution
$f\in{\cD}'(\bQ_p)$ by the relation~\cite[VII.3.]{Vl-V-Z}:
$\langle F[f],\varphi\rangle=\langle f,F[\varphi]\rangle$,
for all $\varphi\in {\cD}(\bQ_p)$.

Any {\it multiplicative character\/} (see~\cite[III.2.]{Vl-V-Z})
$\pi$ of the field $\bQ_p$ can be represented as
\begin{equation}
\label{16}
\pi(x)\stackrel{def}{=}\pi_{\alpha}(x)=|x|_p^{\alpha-1}\pi_{1}(x),
\quad x \in \bQ_p,
\end{equation}
where $\pi(p)=p^{1-\alpha}$ and $\pi_{1}(x)$ is a
{\it normed multiplicative character\/} such that
\begin{equation}
\label{16.1}
\pi_1(x)=\pi_{1}(|x|_px), \quad \pi_1(p)=\pi_1(1)=1,
\quad |\pi_1(x)|=1.
\end{equation}
We denote $\pi_{0}=|x|_p^{-1}$.

\begin{lemma}
\label{lem2}
{\rm(}{\rm~\cite[I.7.]{Taib3},~\cite[III,(2.2)]{Vl-V-Z}}{\rm)}
Let $A_0=S_0=\{x\in \bQ_p:|x|_p=1\}$,
$A_{k}=B_{-k}(1)=\{x\in \bQ_p:|x-1|_p\le p^{-k}\}$, $k\in\bN$.
If $\pi_1$ is a normed multiplicative character {\rm(\ref{16.1})},
then there exists $k\in {\bN}_0$ such that
\begin{equation}
\label{17}
\pi_1(x)\equiv 1, \quad x\in A_{k}.
\end{equation}
\end{lemma}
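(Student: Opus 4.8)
The plan is to combine the subgroup structure of the balls $A_k$ with the continuity of the character $\pi_1$. First I would reduce everything to the compact unit group $S_0=\{x:|x|_p=1\}$: by (\ref{16.1}) the value $\pi_1(x)$ depends only on the unit $|x|_px\in S_0$, and since $\pi_1(p)=1$ this reduction is consistent, so $\pi_1$ is completely determined by its restriction to $S_0$. Next I would record that, for $k\ge 1$, the ball $A_k=B_{-k}(1)=\{x:|x-1|_p\le p^{-k}\}$ lies in $S_0$: from $|x-1|_p<1$ the strong triangle inequality (\ref{1.0}) gives $|x|_p=\max(1,|x-1|_p)=1$. Moreover $A_k$ is an \emph{open subgroup} of $S_0$, since $|x-1|_p\le p^{-k}$ and $|y-1|_p\le p^{-k}$ imply $|xy-1|_p\le p^{-k}$ and $|x^{-1}-1|_p\le p^{-k}$. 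Thus $\{A_k\}_{k\ge 1}$ is a decreasing chain of open subgroups forming a neighborhood basis of $1$ with $\bigcap_{k\ge 1}A_k=\{1\}$, while $A_0=S_0$.

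The key step is then short. Being a continuous multiplicative character with $|\pi_1(x)|=1$, the map $\pi_1$ sends $S_0$ into the unit circle $\{z\in\bC:|z|=1\}$. I would fix the neighborhood $V=\{z:|z-1|<1\}$ of $1=\pi_1(1)$ and note that $V$ contains no subgroup of the circle other than $\{1\}$: any nontrivial subgroup is either dense or a finite cyclic group, and in both cases it possesses a point at angular distance at least $2\pi/3$ from $1$, hence outside $V$. By continuity of $\pi_1$ at $1$, the preimage $\pi_1^{-1}(V)$ is an open neighborhood of $1$ in $S_0$ and therefore contains some $A_k$. Since $A_k$ is a subgroup, its image $\pi_1(A_k)$ is a subgroup of the circle contained in $V$, so $\pi_1(A_k)=\{1\}$; that is, $\pi_1(x)\equiv 1$ on $A_k$, which is (\ref{17}). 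An alternative route is to observe that the continuous image of the profinite group $S_0$ is a compact totally disconnected, hence finite, subgroup of the circle, so $\ker\pi_1$ is open in $S_0$ and thus contains some $A_k$.

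The main obstacle is precisely the upgrade from continuity to the uniform triviality $\pi_1|_{A_k}\equiv 1$ on an entire ball, rather than mere continuity at the point $1$. This is where the non-Archimedean structure is essential: it is the subgroup property of the $A_k$ (furnished by (\ref{1.0})) together with the absence of small nontrivial subgroups in $\bC^{*}$ that forces $\pi_1$ to be \emph{constant} on a whole neighborhood of $1$. The least such $k$ is the \emph{rank} of $\pi_1$ entering (\ref{17}); for the trivial character $k=0$ already works.
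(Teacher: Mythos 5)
Your proof is correct, but note that there is no internal proof in the paper to compare it against: Lemma~\ref{lem2} is imported from \cite[I.7]{Taib3} and \cite[III,(2.2)]{Vl-V-Z}, with the citation standing in place of an argument. What you have written is essentially the standard proof behind those references, namely a ``no small subgroups'' argument: the sets $A_k$, $k\ge 1$, are open subgroups of the compact unit group $S_0$ forming a neighborhood basis of $1$ (your ultrametric computations for $A_k\subset S_0$ and for the subgroup property are right); the unit circle in $\bC$ contains no nontrivial subgroup inside $\{z:|z-1|<1\}$ (the dichotomy ``finite cyclic or dense'' is standard, and in either case the subgroup has an element at distance greater than $1$ from $1$); hence continuity forces $\pi_1(A_k)=\{1\}$ for some $k$. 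Your alternative ending --- the image of the profinite group $S_0$ is a compact, totally disconnected, hence finite subgroup of the circle, so $\ker\pi_1$ is open and contains some $A_k$ --- is equally valid. The one hypothesis you invoke that is not written in (\ref{16.1}) is the continuity of $\pi_1$; this is legitimately available, since the paper defines multiplicative characters by reference to \cite[III.2]{Vl-V-Z}, where continuity is part of the definition, and it is genuinely indispensable: for discontinuous (axiom-of-choice) unitary characters of $S_0$ the conclusion (\ref{17}) fails. So the proposal is a complete and correct proof of a statement the paper itself only cites.
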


The smallest $k_0\in {\bN}_0$ for which the equality
(\ref{17}) holds is called the {\it rank of the normed multiplicative
character} $\pi_1(x)$. There is only one {\em zero rank character}, namely,
$\pi_1(x)\equiv 1$.

Let us introduce the $p$-adic $\Gamma$-{\it functions\/}
(see~\cite[VIII,(2.2),(2.17)]{Vl-V-Z}):
\begin{equation}
\label{25}
\Gamma_p(\alpha)\stackrel{def}{=}\Gamma_p(|x|_p^{\alpha-1})
=\int_{\bQ_p} |x|_p^{\alpha-1}\chi_p(x)\,dx
=\frac{1-p^{\alpha-1}}{1-p^{-\alpha}};
\end{equation}
\begin{equation}
\label{25.1}
\Gamma_p(\pi_{\alpha})\stackrel{def}{=}F[\pi_{\alpha}](1)
=\int_{\bQ_p}|x|_p^{\alpha-1}\pi_{1}(x)\chi_p(x)\,dx.
\end{equation}
Here the integrals in the right-hand sides of (\ref{25}), (\ref{25.1})
are defined by means of analytic continuation with respect to $\alpha$.
According to~\cite[III,Theorem (4.2)]{Taib3},
$\Gamma$-function (\ref{25.1}) can be also defined as improper integral
$\lim_{k\to\infty}\int_{p^{-k}\le |x|_p\le p^{k}}\,\cdot\,\,dx$.

\subsection{Homogeneous and quasi associated homogeneous distributions.}\label{s1*.3}
Let us recall some facts on $p$-adic homogeneous and quasi associated homogeneous
distributions.

\begin{definition}
\label{de1} \rm
(~\cite[Ch.II,\S 2.3.]{G-Gr-P},~\cite[VIII.1.]{Vl-V-Z})
Let $\pi_{\alpha}$ be a multiplicative character (\ref{16}) of
the field $\bQ_p$. A distribution $f \in {\cD}'(\bQ_p)$ is called
{\it homogeneous} of degree $\pi_{\alpha}$ if for all
$\varphi \in {\cD}(\bQ_p)$ we have
$$
\Bigl\langle f,\varphi\Big(\frac{x}{t}\Big) \Bigr\rangle
=\pi_{\alpha}(t)|t|_p \langle f,\varphi \rangle, \quad \forall \, t \in \bQ_p^*,
$$
i.e., $f(tx)=\pi_{\alpha}(t)f(x)$, $t \in \bQ_p^{*}$.
\end{definition}

The following theorem gives a description of all {\em homogeneous distributions}.

\begin{theorem}
\label{th1} \rm{(~\cite[Ch.II,\S
2.3.]{G-Gr-P},~\cite[VIII.1.]{Vl-V-Z})} Every homogeneous distribution
$f\in {\cD}'(\bQ_p)$ of degree $\pi_{\alpha}$ has the form

{\rm (a)} $C\pi_{\alpha}$ if $\pi_{\alpha}\ne\pi_{0}=|x|_p^{-1}$;

{\rm (b)} $C\delta$ if $\pi_{\alpha}=\pi_{0}=|x|_p^{-1}$, where $C$ is a
constant.
\end{theorem}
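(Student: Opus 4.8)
The plan is to separate the behavior of $f$ away from the origin from its behavior at the origin, exploiting the factorization of the dilation group $\bQ_p^*=p^{\bZ}\times S_0$, where $S_0=\{x:|x|_p=1\}$ is the compact group of units.

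First I would classify $f$ on the punctured space, i.e.\ on test functions $\varphi\in\cD(\bQ_p)$ with $\supp\varphi\subset\bQ_p^*$. Such a support is compact and bounded away from $0$, hence meets only finitely many spheres $S_\gamma$, and homogeneity under the dilations $t=p^{-\gamma}$ transports everything onto the single sphere $S_0$; so it suffices to determine $f$ on functions supported in $S_0$. On $S_0$ the residual symmetry is dilation by units $t\in S_0$, for which $|t|_p=1$ and $\pi_{\alpha}(t)=\pi_1(t)$, so the defining relation reads $\langle f,\varphi(\cdot/t)\rangle=\pi_1(t)\langle f,\varphi\rangle$. The key step is to average this identity over the compact group $S_0$ against the normalized Haar measure $d^*t=dx/|x|_p$: substituting $s=t^{-1}$, the averaged test function $\psi(x)=\int_{S_0}\pi_1(s)\varphi(sx)\,d^*s$ satisfies $\psi(ux)=\pi_1(u)^{-1}\psi(x)$ and is therefore forced to equal $\psi(1)\,\pi_1^{-1}(x)$ on $S_0$, whence $\langle f,\varphi\rangle=\langle f,\psi\rangle=C\int_{S_0}\pi_1(s)\varphi(s)\,d^*s$ for the constant $C$. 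This is the $p$-adic incarnation of the fact that a distribution on a compact abelian group transforming by a character spans a one-dimensional space. Combined with the $p$-power homogeneity, it gives $f=C\pi_{\alpha}$ on all of $\bQ_p^*$.

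The second ingredient is the $p$-adic rigidity at a point: any $g\in\cD'(\bQ_p)$ with $\supp g\subseteq\{0\}$ is a multiple of $\delta$. This is where the non-Archimedean setting simplifies the real picture, since no derivatives of $\delta$ can occur: given $\varphi$, local constancy provides $l$ with $\varphi\equiv\varphi(0)$ on $B_l$, so $\varphi-\varphi(0)\Delta_l$ vanishes on the neighborhood $B_l$ of $0$ and is annihilated by $g$, giving $\langle g,\varphi\rangle=\varphi(0)\langle g,\Delta_l\rangle$; testing the sphere-supported differences $\Delta_l-\Delta_{l'}$ shows $\langle g,\Delta_l\rangle$ is independent of $l$, so $g=C\delta$.

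Finally I would glue the two cases. If $\pi_{\alpha}\neq\pi_0$, the character $\pi_{\alpha}$ itself defines, through the analytic continuation in (\ref{25.1}), a genuine homogeneous distribution on $\bQ_p$ agreeing with the function $\pi_{\alpha}$ on $\bQ_p^*$; subtracting the multiple found above makes $f-C\pi_{\alpha}$ supported at $\{0\}$, hence equal to $C'\delta$, and since $\delta$ is homogeneous of degree $\pi_0\neq\pi_{\alpha}$ one must have $C'=0$, giving (a). If $\pi_{\alpha}=\pi_0=|x|_p^{-1}$, the function $|x|_p^{-1}$ is non-integrable at the origin and admits no homogeneous extension: applying homogeneity with $t=p$ to $\Delta_0$ yields $\langle f,\Delta_k\rangle=\langle f,\Delta_0\rangle$ for all $k$, while the formula from step one evaluates $\langle f,\Delta_k-\Delta_{k-1}\rangle=C(1-p^{-1})$ on the sphere $S_k$; comparison forces $C=0$, so $\supp f\subseteq\{0\}$ and $f=C\delta$, which is indeed homogeneous of degree $\pi_0$, giving (b). The main obstacle is the averaging argument of step one, namely establishing one-dimensionality of the $\pi_1$-eigenspace on the compact group $S_0$; the remaining steps are then routine $p$-adic bookkeeping.
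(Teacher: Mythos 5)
Your proof is correct, but note that the paper itself contains no proof of this theorem: it is quoted as a known classification result from~\cite[Ch.II,\S 2.3]{G-Gr-P} and~\cite[VIII.1]{Vl-V-Z}, so the only comparison available is with the classical argument in those references --- which is essentially what you have reconstructed. Your three ingredients (restriction to $\bQ_p^{*}$ with character-averaging over the compact unit group $S_0$ to get one-dimensionality of the $\pi_1$-eigenspace; the fact that a $p$-adic distribution supported at a point is a multiple of $\delta$, with no derivative terms; and the dichotomy according to whether $\pi_{\alpha}$ admits a homogeneous extension across the origin) are exactly the standard ones, and each step checks out: the interchange of the $S_0$-average with the pairing is legitimate because $t\mapsto\varphi(\cdot/t)$ is locally constant on $S_0$, so the average is a finite sum, and your case~(b) computation ($\langle f,\Delta_k\rangle$ independent of $k$ by homogeneity versus $\langle f,\Delta_k-\Delta_{k-1}\rangle=C(1-p^{-1})$ from step one) correctly forces $C=0$. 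Three minor points. First, $d^{*}t=dx/|x|_p$ restricted to $S_0$ has total mass $1-p^{-1}$, not $1$; harmless, as the constant is absorbed into $C$. Second, in case~(a) the homogeneous extension of $\pi_{\alpha}$ should be referenced to the regularization (\ref{19.3})/(\ref{24.1}) with $m=0$ rather than to (\ref{25.1}), which defines the $\Gamma$-function. Third, your conclusion $C'=0$ tacitly uses that this regularized extension is itself homogeneous of degree $\pi_{\alpha}$; this is true, and it is precisely here that the hypothesis $\pi_{\alpha}\ne\pi_0$ enters, since for $\pi_1\equiv 1$ the excluded points $\alpha_j=\frac{2\pi i}{\ln p}j$ where the analytic continuation of $I_0(\alpha)$ has a pole are exactly those with $\pi_{\alpha}=\pi_0$ --- it would be worth making that dependence explicit.
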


\begin{definition}
\label{de1.1}
\rm (~\cite{Al-Kh-Sh1},~\cite{Al-Kh-Sh2}) A distribution
$f_m\in {\cD}'(\bQ_p)$ is said to be {\it quasi associated
homogeneous} of
degree~$\pi_{\alpha}$ and order~$m$, \ $m \in \bN_{0}$, if
for all $\varphi \in {\cD(\bQ_p)}$ we have
$$
\Bigl\langle f_m,\varphi\Big(\frac{x}{t}\Big)\Bigr\rangle
=\pi_{\alpha}(t)|t|_p \langle f_m,\varphi \rangle
+\sum_{j=1}^{m}\pi_{\alpha}(t)|t|_p\log_p^j|t|_p
\langle f_{m-j},\varphi \rangle, \quad \forall \, t \in \bQ_p^*,
$$
where $f_{m-j}\in {\cD}'(\bQ_p)$ is an associated homogeneous distribution
of degree~$\pi_{\alpha}$ and order $m-j$, \ $j=1,2,\dots,m$, i.e.,
$$
f_m(tx)=\pi_{\alpha}(t)f_m(x)
+\sum_{j=1}^{m}\pi_{\alpha}(t)\log_p^j|t|_pf_{m-j}(x).
$$
If $m=0$ we set that the above sum is empty.
\end{definition}

A class of {\it quasi associated homogeneous} distributions of order~$m=0$
coincides with the class of {\it homogeneous} distributions.

\begin{theorem}
\label{th1.2}
\rm{(~\cite{Al-Kh-Sh1},~\cite{Al-Kh-Sh2})} Every
associated homogeneous distribution $f\in {\cD}'(\bQ_p)$ of degree
$\pi_{\alpha}(x)$ and order $m\in \bN$
{\rm(}with accuracy up to an associated homogeneous distribution
of order $\le m-1${\rm)} has the form

{\rm (a)} $C\pi_{\alpha}(x)\log_p^m|x|_p$ if  $\pi_{\alpha}\ne\pi_{0}=|x|_p^{-1}$;

{\rm (b)} $CP\big(|x|_p^{-1}\log_p^{m-1}|x|_p\big)$ if
 $\pi_{\alpha}=\pi_{0}=|x|_p^{-1}$, where $C$ is a constant.
\end{theorem}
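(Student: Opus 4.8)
The plan is to prove Theorem~\ref{th1.2} by reducing the classification of quasi associated homogeneous distributions of order $m$ to the already-known homogeneous case (Theorem~\ref{th1}) via a logarithmic substitution, then analyzing the resulting functional equation by induction on the order $m$. First I would unwind Definition~\ref{de1.1}: a distribution $f_m$ is quasi associated homogeneous of degree $\pi_\alpha$ and order $m$ precisely when the ``deviation from homogeneity'' is controlled by lower-order associated homogeneous distributions $f_{m-j}$. The key structural observation is that the scaling identity
$$
f_m(tx)=\pi_{\alpha}(t)f_m(x)+\sum_{j=1}^{m}\pi_{\alpha}(t)\log_p^j|t|_p f_{m-j}(x)
$$
has exactly the form one obtains by differentiating $C\pi_\alpha(x)\log_p^m|x|_p$ (in case (a)) with respect to a parameter, since $\log_p^m$ plays the role of the generalized (Jordan-block) eigenvector in the real setting recalled in Remark~\ref{rem1}. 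So the strategy is to verify directly that the candidate distributions in (a) and (b) satisfy this identity, and then to show they are the only solutions modulo lower order.

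The main steps I would carry out in order are as follows. First, I would treat the nondegenerate case $\pi_\alpha\neq\pi_0$ by induction on $m$. The base case $m=0$ is exactly Theorem~\ref{th1}(a), giving $f_0=C\pi_\alpha$. For the inductive step, given $f_m$ I would form the auxiliary distribution obtained by subtracting an appropriate multiple of $\pi_\alpha(x)\log_p^m|x|_p$; a direct computation of how $\pi_\alpha(x)\log_p^m|x|_p$ transforms under $x\mapsto tx$ (using $\pi_\alpha(tx)=\pi_\alpha(t)\pi_\alpha(x)$ and the additive expansion $\log_p^m|tx|_p=\bigl(\log_p|t|_p+\log_p|x|_p\bigr)^m$ via the binomial theorem) produces precisely the lower-order tail in Definition~\ref{de1.1}. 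Choosing the constant $C$ to match the top-order coefficient $f_0$ reduces $f_m$ to a quasi associated homogeneous distribution of order $\le m-1$, to which the induction hypothesis applies. Second, for the degenerate case $\pi_\alpha=\pi_0=|x|_p^{-1}$, the base case $m=0$ is Theorem~\ref{th1}(b), namely $C\delta$; here the analogous computation must be done for $P\bigl(|x|_p^{-1}\log_p^{m-1}|x|_p\bigr)$, the regularized principal-value distribution, whose scaling behavior I would establish using the analytic-continuation definition of the $\Gamma_p$-functions in (\ref{25}), (\ref{25.1}) and the regularization of the divergent integral on small balls.

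The hard part will be the degenerate case (b), because there the naive homogeneous solution is a delta function rather than $\pi_\alpha$ itself, so the logarithmic scaling identity cannot be satisfied by a locally integrable function and one genuinely needs the regularized distribution $P\bigl(|x|_p^{-1}\log_p^{m-1}|x|_p\bigr)$. Computing how this regularization transforms under dilation requires tracking the renormalization constants carefully, since the substitution $x\mapsto tx$ shifts the cutoff and the finite part picks up extra lower-order logarithmic terms; ensuring these collapse exactly into the form demanded by Definition~\ref{de1.1} is where the delicate bookkeeping lies. A secondary subtlety, flagged in Remark~\ref{rem1}, is that one must work with the \emph{quasi} associated notion throughout: the classification is only ``with accuracy up to an associated homogeneous distribution of order $\le m-1$'', so uniqueness is asserted only modulo the lower-order space, and I would phrase the inductive conclusion accordingly rather than claiming a literally unique representative. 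Since the detailed verification of the scaling identities for both candidate families, together with the regularization computation in case (b), is carried out in the cited works~\cite{Al-Kh-Sh1},~\cite{Al-Kh-Sh2} (adapting the real-variable analysis of~\cite{Sh1}), I would present the proof as this induction with the explicit dilation computations supplied at each step.
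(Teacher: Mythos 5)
The first thing to say is that the paper contains no proof of Theorem~\ref{th1.2} at all: it is stated in the preliminaries as a result quoted from~\cite{Al-Kh-Sh1},~\cite{Al-Kh-Sh2}, so there is no in-paper argument to compare yours against, and your proposal can only be measured against the proof in those references. On that score your reconstruction is essentially the standard one. Your two pillars --- (i) verifying via the binomial expansion $\log_p^m|tx|_p=\bigl(\log_p|t|_p+\log_p|x|_p\bigr)^m$ that $\pi_\alpha(x)\log_p^m|x|_p$, respectively the regularized $P\bigl(|x|_p^{-1}\log_p^{m-1}|x|_p\bigr)$, satisfies the dilation identity of Definition~\ref{de1.1}, and (ii) subtracting $C$ times the candidate, with $C$ chosen to match the order-zero coefficient $f_0$, which is homogeneous and hence equal to $C\pi_\alpha$ or $C\delta$ by Theorem~\ref{th1} --- is exactly how the classification is obtained there. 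Matching $f_0$ is indeed the right choice, since $f_0$ is the coefficient of the top power $\log_p^m|t|_p$, so after subtraction the dilation identity closes with the sum running only to $m-1$. Note also that for the theorem as phrased (``with accuracy up to an associated homogeneous distribution of order $\le m-1$'') this single subtraction already finishes the proof; your induction on $m$ is only needed if you want the stronger statement that the whole class is spanned by $\pi_\alpha(x)\log_p^k|x|_p$, $k\le m$, together with the case (b) distributions.

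There is one genuine soft spot you should not gloss over. Under Definition~\ref{de1.1} as literally stated, the coefficients $f_{m-j}$ must themselves be associated homogeneous of order $m-j$, and after your subtraction the new coefficients $f_{m-j}-C\,C_m^j\,\pi_\alpha(x)\log_p^{m-j}|x|_p$ for $1\le j\le m-1$ are a priori only of order $\le m-j$, which is one order too high for the difference to satisfy the definition of an order-$(m-1)$ distribution. To close this you need the consistency (cocycle) relations obtained by computing $f_m(t_1t_2x)$ in two ways using the group law of dilations; these force the coefficient system of each $f_{m-i}$ to interlock with that of $f_m$ (the $\log_p^j|t|_p$-coefficient of $f_{m-i}$ is a binomial multiple of $f_{m-i-j}$), after which the subtraction argument propagates down correctly. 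Equivalently, and more cleanly, you can work throughout with the genuinely ``quasi'' definition of~\cite{Sh1}, in which the coefficients are arbitrary distributions; then your subtraction closes immediately, and by Remark~\ref{rem1} this defines the same class. Finally, your case (b) still rests entirely on the deferred computation of how $P(\cdot)$ transforms under dilation (the delta terms produced by the shifted cutoff are what raise the order from the $m-1$ logarithms inside $P$ to order $m$); that computation is carried out only in the cited papers --- but since the paper under review defers the whole proof to the same sources, this is not a defect relative to it.
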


According to the papers~\cite{Al-Kh-Sh1},~\cite{Al-Kh-Sh2}, an
{\it associated homogeneous distribution\/} of degree
$\pi_{\alpha}(x)=|x|_p^{\alpha-1}\pi_1(x) \ne \pi_{0}(x)=|x|_p^{-1}$
and order $m\in \bN$ is defined as
$$
\langle \pi_{\alpha}(x)\log_p^m|x|_p,\varphi(x) \rangle
=\int_{B_0}|x|_p^{\alpha-1}\pi_1(x)\log_p^m|x|_p
\big(\varphi(x)-\varphi(0)\big)\,dx
$$
$$
+\int_{\bQ_p\setminus B_0}|x|_p^{\alpha-1}\pi_1(x)\log_p^m|x|_p\varphi(x)\,dx
\qquad\qquad
$$
\begin{equation}
\label{19.3}
\qquad\qquad\qquad\qquad
+\varphi(0)\int_{B_0}|x|_p^{\alpha-1}\pi_1(x)\log_p^m|x|_p\,dx,
\end{equation}
for all $\varphi\in {\cD}(\bQ_p)$, where
$$
I_{0}(\alpha;m)=\int_{B_0}|x|_p^{\alpha-1}\pi_1(x)\log_p^m|x|_p\,dx
\qquad\qquad\qquad\qquad\qquad\qquad\qquad\qquad
$$
\begin{equation}
\label{19.4}
=\log_p^m e\frac{d^m I_{0}(\alpha)}{d\alpha^m}
=\log_p^m e\left\{
\begin{array}{rcl}
0, \quad \pi_1(x) &\not\equiv& 1, \\
\frac{d^m}{d\alpha^m}
\bigg(\frac{1-p^{-1}}{1-p^{-\alpha}}\bigg), \quad \pi_1(x) &\equiv& 1, \\
\end{array}
\right.
\end{equation}
where the integral
\begin{equation}
\label{24.1}
I_0(\alpha)=\int_{B_0}|x|_p^{\alpha-1}\pi_1(x)\,dx
=\left\{
\begin{array}{rcl}
0, \quad \pi_1(x) &\not\equiv& 1, \\
\frac{1-p^{-1}}{1-p^{-\alpha}}, \quad \pi_1(x) &\equiv& 1, \\
\end{array}
\right.
\end{equation}
is well defined for $Re \alpha >0$, and for
$\alpha \ne \alpha_j=\frac{2\pi i}{\ln p}j$, $j\in \bZ$ (\ref{24.1})
is defined by means of analytic continuation.

According to the same papers, an {\it associated homogeneous
distribution\/} of degree $\pi_{0}(x)=|x|_p^{-1}$ and order $m\in
\bN$ is defined as
$$
\Bigl\langle P\Big(\frac{\log_p^{m-1}|x|_p}{|x|_p}\Big),\varphi
\Bigr\rangle
\qquad\qquad\qquad\qquad\qquad\qquad\qquad\qquad\qquad
$$
\begin{equation}
\label{19.5}
=\int_{B_0}\frac{\log_p^{m-1}|x|_p}{|x|_p}\big(\varphi(x)-\varphi(0)\big)\,dx
+\int_{\bQ_p\setminus
B_0}\frac{\log_p^{m-1}|x|_p}{|x|_p}\varphi(x)\,dx,
\end{equation}
for all $\varphi\in {\cD}(\bQ_p)$.

\section{$p$-Adic stable distributional asymptotics}
\label{s1**}

Let us introduce a definition of the distributional
asymptotics~\cite{Brich} adapted to the case of $\bQ_p$.

\begin{definition}
\label{de2} \rm A sequence of continuous complex-valued functions
$\psi_k(t)$ on the multiplicative group $\bQ_p^*$ is called an
asymptotic sequence, as $|t|_p \to \infty$ if
$\psi_{k+1}(t)=o(\psi_k(t))$, $|t|_p \to \infty$ for all
$k=1,2,\dots$.
\end{definition}

\begin{definition}
\label{de3}
\rm Let $f(x,t)\in {\cD}'(\bQ_p)$ be a distribution
depending on $t$ as a parameter, and $C_k(x)\in {\cD}'(\bQ_p)$ be
distributions, $k=1,2,\dots$. We say that the relation
\begin{equation}
\label{4}
f(x,t)\approx \sum_{k=1}^{\infty}C_k(x)\psi_k(t),
\quad |t|_p \to \infty,
\end{equation}
is an asymptotical expansion of the distribution $f(x,t)$, as
$|t|_p \to \infty$, with respect to an asymptotic sequence
$\{\psi_k(t)\}$ if
\begin{equation}
\label{4.1}
\langle f(x,t),\varphi(x) \rangle \approx
\sum_{k=1}^{\infty}\langle C_k(x), \varphi(x) \rangle\psi_k(t),
\quad |t|_p \to \infty,
\end{equation}
for any $\varphi\in {\cD}(\bQ_p)$, i.e.,
$$
\langle f(x,t),\varphi(x) \rangle -\sum_{k=1}^{N}\langle C_k(x),
\varphi(x) \rangle\psi_k(t)=o(\psi_N(t)), \quad |t|_p \to \infty,
$$
for any $N$.
\end{definition}

\begin{definition}
\label{de4} \rm
Suppose that a distribution $f(x,t)\in
{\cD}'(\bQ_p)$ has the asymptotical expansion (\ref{4}). If for
any test function $\varphi(x)\in {\cD}(\bQ_p)$ there exists a
number $s(\varphi)$ depending on~$\varphi$ such that for all
$|t|_p>s(\varphi)$ relation (\ref{4.1}) is an {\it exact
equality\/}, we say that the asymptotical expansion (\ref{4}) is
{\it stable\/} and write
\begin{equation}
\label{4.2}
f(x,t)=\sum_{k=1}^{\infty}C_k(x)\psi_k(t), \quad |t|_p \to \infty.
\end{equation}
A number $s(\varphi)$ is called the {\it stabilization parameter}
of the asymptotical expansion (\ref{4}).
\end{definition}

\section{Asymptotic formulas for singular Fourier integrals (the case $\pi_1(x) \equiv 1$)}
\label{s2}

\subsection{The case $f_{\pi_{\alpha};m}(x)=|x|_p^{\alpha-1}\log_p^m|x|_p$, $\alpha\ne 0$,
$m=0,1,2,\dots$.}\label{s2.1}

\begin{theorem}
\label{th2-1}
Let $\varphi\in {\cD}^l_N(\bQ_p)$. Then the functional
$J_{\pi_{\alpha},m;\varphi}(t)$ has the following asymptotical
behavior:

{\rm (a)} If $m=0$, then
\begin{equation}
\label{49.0}
J_{\pi_{\alpha},0;\varphi}(t) =\Bigl\langle
|x|_p^{\alpha-1}\chi_p(xt),\varphi(x)\Bigr\rangle
=\varphi(0)\frac{\Gamma_p(\alpha)}{|t|_p^{\alpha}}, \quad |t|_p > p^{-l},
\end{equation}
the $\Gamma$-function $\Gamma_p(\alpha)$ is given by {\rm (\ref{25})},
i.e., in the weak sense
\begin{equation}
\label{49.0*}
\quad
|x|_p^{\alpha-1}\chi_p(xt)
=\delta(x)\frac{\Gamma_p(\alpha)}{|t|_p^{\alpha}}, \quad |t|_p \to
\infty.
\end{equation}

{\rm (b)} If $m=1,2,\dots$, then
$$
J_{\pi_{\alpha},m;\varphi}(t) =\Bigl\langle
|x|_p^{\alpha-1}\log_p^m|x|_p\chi_p(xt), \varphi(x)\Bigr\rangle
\qquad\qquad\qquad\qquad
$$
\begin{equation}
\label{49}
\qquad
=\varphi(0)\sum_{k=0}^{m}C_{m}^{k}\log_p^{k}e\,\frac{d^{k}\Gamma_p(\alpha)}{d\alpha^{k}}
\frac{\log_p^{m-k}|t|_p}{|t|_p^{\alpha}}, \quad |t|_p > p^{-l},
\end{equation}
i.e., in the weak sense
$$
|x|_p^{\alpha-1}\log_p^m|x|_p\chi_p(xt)
\qquad\qquad\qquad\qquad\qquad\qquad\qquad
$$
\begin{equation}
\label{49.1}
\qquad
=\delta(x)\sum_{k=0}^{m}C_{m}^{k}\log_p^{k}e\,\frac{d^{k}\Gamma_p(\alpha)}{d\alpha^{k}}
\frac{\log_p^{m-k}|t|_p}{|t|_p^{\alpha}}, \quad |t|_p \to \infty,
\end{equation}
with respect to an asymptotic sequence
$\{|t|_p^{-\alpha}\log_p^{m-k}|t|_p:k=0,1,\dots,m\}$.

Thus for any $\varphi \in {\cD}(\bQ_p)$, relations {\rm(\ref{49.0})},
{\rm (\ref{49})} are exact equalities for sufficiently big
$|t|_p>p^{-l}$, i.e., these asymptotical expansions are stable
with the stabilization parameter $s(\varphi)=p^{-l}$.
\end{theorem}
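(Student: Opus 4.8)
The plan is to reduce the pairing \(J_{\pi_{\alpha},m;\varphi}(t)=\langle f_{\pi_{\alpha};m}(x),\chi_p(xt)\varphi(x)\rangle\) to an explicit integral by isolating the value of \(\varphi\) at the origin. Writing \(\Delta_l(x)=\Omega(p^{-l}|x|_p)\) for the characteristic function of \(B_l\), I would split \(\varphi=\varphi(0)\Delta_l+g\) with \(g(x)=\varphi(x)-\varphi(0)\Delta_l(x)\). Since \(\varphi\in{\cD}^l_N(\bQ_p)\), the function \(\varphi\) is constant and equal to \(\varphi(0)\) on \(B_l\), so \(g\) vanishes on \(B_l\); moreover both \(\varphi\) and \(\Delta_l\) have parameter of constancy \(\ge l\) and support in \(B_N\), whence \(g\in{\cD}^l_N(\bQ_p)\) as well. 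This yields the decomposition \(J_{\pi_{\alpha},m;\varphi}(t)=\varphi(0)\langle f_{\pi_{\alpha};m},\Delta_l\chi_p(\cdot\,t)\rangle+\langle f_{\pi_{\alpha};m}\,g,\chi_p(\cdot\,t)\rangle\).

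The second term is the source of the stabilization, and I would treat it first. On the support of \(g\) the function \(f_{\pi_{\alpha};m}(x)=|x|_p^{\alpha-1}\log_p^m|x|_p\) is an ordinary locally constant function bounded away from the singularity at \(0\); being constant on each sphere \(S_\gamma\) with \(\gamma\ge l+1\) it has parameter of constancy \(\ge l\) there, so the product \(f_{\pi_{\alpha};m}\,g\) is a genuine test function lying in \({\cD}^l_N(\bQ_p)\). By Lemma~\ref{lem1}, precisely the equivalence (\ref{12.1}), its Fourier transform belongs to \({\cD}^{-N}_{-l}(\bQ_p)\) and is therefore supported in \(B_{-l}\). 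Hence \(\langle f_{\pi_{\alpha};m}\,g,\chi_p(\cdot\,t)\rangle=F[f_{\pi_{\alpha};m}\,g](t)=0\) for all \(|t|_p>p^{-l}\), which simultaneously kills the second term and pins down the stabilization parameter \(s(\varphi)=p^{-l}\).

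It then remains to evaluate \(\varphi(0)\langle f_{\pi_{\alpha};m},\Delta_l\chi_p(\cdot\,t)\rangle\). Applying the definition (\ref{19.3}) with \(\pi_1\equiv1\) to the test function \(\Delta_l\chi_p(\cdot\,t)\), whose value at the origin is \(1\), the regularizing term \(\varphi(0)I_0(\alpha;m)\) of (\ref{19.4}) cancels the auxiliary integral \(\int_{B_0}\) occurring in (\ref{19.3}), and the pairing is identified with the analytic continuation in \(\alpha\) of \(\int_{B_l}|x|_p^{\alpha-1}\log_p^m|x|_p\,\chi_p(xt)\,dx\). For \(m=0\) I would compute the complementary tail \(\int_{\bQ_p\setminus B_l}|x|_p^{\alpha-1}\chi_p(xt)\,dx=\sum_{\gamma\ge l+1}p^{\gamma(\alpha-1)}\int_{S_\gamma}\chi_p(xt)\,dx\) and observe, from the elementary formula \(\int_{B_\gamma}\chi_p(xt)\,dx=p^{\gamma}\Omega(p^{\gamma}|t|_p)\), that every sphere integral with \(\gamma\ge l+1\) vanishes once \(|t|_p>p^{-l}\). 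Thus the integral over \(B_l\) coincides with the integral over all of \(\bQ_p\), which by the scaling substitution \(x\mapsto x/t\) and definition (\ref{25}) equals \(\Gamma_p(\alpha)|t|_p^{-\alpha}\), giving (\ref{49.0}). For general \(m\) I would use the identity \(|x|_p^{\alpha-1}\log_p^m|x|_p=\log_p^m e\,\frac{d^m}{d\alpha^m}|x|_p^{\alpha-1}\) (the mechanism behind (\ref{19.4})), differentiate the established relation \(\int_{B_l}|x|_p^{\alpha-1}\chi_p(xt)\,dx=\Gamma_p(\alpha)|t|_p^{-\alpha}\) \(m\) times in \(\alpha\), and expand by the Leibniz rule, converting the arising factors \((\ln p)^{m-k}\) into \(\log_p^k e\) via \(\log_p e\cdot\ln p=1\), so as to produce the coefficients \(C_m^k\log_p^k e\,\frac{d^k\Gamma_p(\alpha)}{d\alpha^k}\log_p^{m-k}|t|_p\) of (\ref{49}).

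The main obstacle I anticipate is the localization argument of the second paragraph together with the bookkeeping that makes the first term an \emph{exact} identity rather than merely an asymptotic one: one must check carefully that \(f_{\pi_{\alpha};m}\,g\) genuinely has parameter of constancy \(\ge l\), so that (\ref{12.1}) applies with the sharp radius \(B_{-l}\), and that the regularizing term \(I_0(\alpha;m)\) in (\ref{19.3})--(\ref{19.4}) cancels precisely, so that the continued integral over \(B_l\) is the correct object to differentiate. Granting these, the vanishing of the sphere integrals for \(\gamma\ge l+1\) is exactly what upgrades the expansion to the exact equalities (\ref{49.0}), (\ref{49}) valid for \(|t|_p>p^{-l}\), i.e.\ to stability in the sense of Definition~\ref{de4}.
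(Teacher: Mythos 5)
Your proposal is correct and is essentially the paper's own argument: your splitting $\varphi=\varphi(0)\Delta_l+g$ is exactly the paper's decomposition \eqref{45.1} (your term $F\big[f_{\pi_{\alpha};m}\,g\big](t)$ coincides with $J_{\pi_{\alpha},m;\varphi}^{1}(t)+J_{\pi_{\alpha},m;\varphi}^{2}(t)$, killed for $|t|_p>p^{-l}$ by the same appeal to \eqref{12.1} as in \eqref{55}), and your evaluation of the main term --- identifying $\langle f_{\pi_{\alpha};m},\Delta_l\chi_p(\cdot\,t)\rangle$ with the analytic continuation of $\int_{B_l}|x|_p^{\alpha-1}\log_p^m|x|_p\chi_p(xt)\,dx$, computing the case $m=0$, and then differentiating $m$ times in $\alpha$ via the Leibniz rule --- is precisely the paper's treatment of $J_{\pi_{\alpha},m}^{0}(t)$ in \eqref{48}, \eqref{52}--\eqref{54}. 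The only differences are cosmetic: you handle all $\alpha$ uniformly through the regularized definition \eqref{19.3}--\eqref{19.4}, whereas the paper first disposes of $Re\,\alpha>0$ by absolute convergence together with Lemmas~\ref{lem3}--\ref{lem4} and then runs the continuation argument for $Re\,\alpha<0$, and you rederive the elementary Fourier integral \eqref{52} by sphere-by-sphere computation instead of quoting \eqref{29}.
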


\begin{proof}
Let $Re \alpha>0$. In this case
$|x|_p^{\alpha-1}\log_p^m|x|_p\varphi(x)\in {\cL}^{1}(\bQ_p)$, and
the integral
$$
J_{\pi_{\alpha},m;\varphi}(t) =\bigl\langle
|x|_p^{\alpha-1}\log_p^m|x|_p\chi_p(xt), \varphi(x)\bigr\rangle
\qquad\qquad
$$
$$
\qquad
=\int_{\bQ_p}
|x|_p^{\alpha-1}\log_p^m|x|_p\chi_p(xt)\varphi(x)\,dx
$$
converges absolutely. Hence, according to the Riemann-Lebesque
theorem \cite[VII.3.]{Vl-V-Z}, $J_{\pi_{\alpha},m;\varphi}(t)\to 0$,
as $|t|_p \to \infty$. More precisely, since $\varphi(x)\in
{\cD}^l_N(\bQ_p)$ then, in view of Lemmas~\ref{lem3},~\ref{lem4},
\begin{equation}
\label{51}
J_{\pi_{\alpha},m;\varphi}(t)
=\varphi(0)\frac{1}{|t|_p^{\alpha}}
\sum_{k=0}^{m}C_{m}^{k}\log_p^{k}e\,\frac{d^{k}\Gamma_p(\alpha)}{d\alpha^{k}}
\log_p^{m-k}|t|_p, \quad \forall \, |t|_p> p^{-l}.
\end{equation}
Thus relations (\ref{49.0})--(\ref{49.1}) hold.

Let $Re \alpha <0$. In this case we define the functional
$J_{\pi_{\alpha},m;\varphi}(t)$ by the analytical continuation
with respect to $\alpha$. According to (\ref{19.3}), (\ref{19.4}):
$$
J_{\pi_{\alpha},m;\varphi}(t) =\bigl\langle
|x|_p^{\alpha-1}\log_p^m|x|_p\chi_p(xt), \varphi(x)\bigr\rangle
\qquad\qquad\qquad\qquad\qquad
$$
$$
=\int_{B_0}|x|_p^{\alpha-1}\log_p^m|x|_p\chi_p(xt)
\big(\varphi(x)-\varphi(0)\big)\,dx
$$
$$
\qquad\qquad\quad
+\int_{\bQ_p\setminus B_0}
|x|_p^{\alpha-1}\log_p^m|x|_p\chi_p(xt)\varphi(x)\,dx
$$
\begin{equation}
\label{45}
\qquad\quad
+\varphi(0)\int_{B_0}|x|_p^{\alpha-1}\log_p^m|x|_p\chi_p(xt)\,dx,
\end{equation}
for all $\varphi\in {\cD}(\bQ_p)$, where the last integral in
(\ref{45}) is defined by means of analytic continuation with
respect to $\alpha$.

Since $\varphi\in {\cD}^l_N(\bQ_p)$, it is {\it natural\/} to
rewrite functional (\ref{45}) as the following sum:
\begin{equation}
\label{45.1}
J_{\pi_{\alpha},m;\varphi}(t)=J_{\pi_{\alpha},m;\varphi}^{1}(t)
+J_{\pi_{\alpha},m;\varphi}^{2}(t)+\varphi(0)J_{\pi_{\alpha},m}^{0}(t),
\end{equation}
where
\begin{equation}
\label{46}
J_{\pi_{\alpha},m;\varphi}^{1}(t)
=\int_{B_l}|x|_p^{\alpha-1}\log_p^m|x|_p\chi_p(xt)
\big(\varphi(x)-\varphi(0)\big)\,dx,
\end{equation}
\begin{equation}
\label{47}
J_{\pi_{\alpha},m;\varphi}^{2}(t) =\int_{\bQ_p\setminus
B_l} |x|_p^{\alpha-1}\log_p^m|x|_p\chi_p(xt)\varphi(x)\,dx,
\end{equation}
\begin{equation}
\label{48}
J_{\pi_{\alpha},m}^{0}(t)
=\int_{B_l}|x|_p^{\alpha-1}\log_p^m|x|_p\chi_p(xt)\,dx.
\qquad\qquad
\end{equation}
Here integral (\ref{48}) is defined by means of analytic
continuation with respect to $\alpha$.

For $Re \alpha >0$ and $m=0$, according to (\ref{29}), integral
(\ref{48}) is equal to
$$
J_{\pi_{\alpha},0}^{0}(t)=F\big[|x|_p^{\alpha-1}\Delta_{l}(x)\big](t)
=\int_{B_{l}}\chi_p(tx)|x|_p^{\alpha-1}\,dx \qquad\qquad
$$
\begin{equation}
\label{52}
\qquad\qquad
=\frac{1-p^{-1}}{1-p^{-\alpha}}p^{\alpha l}\Delta_{-l}(t)
+\frac{\Gamma_p(\alpha)}{|t|_p^{\alpha}}\big(1-\Delta_{-l}(t)\big).
\end{equation}
For any $\alpha \ne \alpha_j=\frac{2\pi i}{\ln p}j$, \ $j\in \bZ$
we define $J_{\pi_{\alpha},0}^{0}(t)$ by means of analytic
continuation with respect to $\alpha$.

Differentiating relation (\ref{52}) with respect to~$\alpha$, we
obtain
$$
J_{\pi_{\alpha},m}^{0}(t)
=F\big[|x|_p^{\alpha-1}\log_p^m|x|_p\Delta_{l}(x)\big](t)
\qquad\qquad\qquad\qquad\qquad
$$
$$
=\int_{B_{l}}\chi_p(tx)|x|_p^{\alpha-1}\log_p^m|x|_p\,dx
=\log_p^m e\frac{d^{m}}{d\alpha^{m}}J_{\pi_{\alpha},0}^{0}(t)
$$
$$
=\Delta_{-l}(t)\big(1-p^{-1}\big)
\frac{d^m}{d\alpha^m}\Big(\frac{p^{\alpha l}}{1-p^{-\alpha}}\Big)\log_p^m e
\qquad\qquad\quad
$$
\begin{equation}
\label{53}
\qquad\qquad
+\big(1-\Delta_{-l}(t)\big)\frac{1}{|t|_p^{\alpha}}
\sum_{k=0}^{m}C_{m}^{k}\log_p^{k}e\,\frac{d^{k}\Gamma_p(\alpha)}{d\alpha^{k}}
\log_p^{m-k}|t|_p.
\end{equation}
Note that by using formulas from~\cite[Ch.II,\S2.2.]{G-Gr-P},~\cite[IV]{Vl-V-Z},
relation (\ref{53}) can be calculated explicitly.

According to (\ref{53}), we have
\begin{equation}
\label{54}
J_{\pi_{\alpha},m}^{0}(t)=\frac{1}{|t|_p^{\alpha}}
\sum_{k=0}^{m}C_{m}^{k}\frac{d^{k}\Gamma_p(\alpha)}{d\alpha^{k}}
\log_p^{m-k}|t|_p, \quad |t|_p> p^{-l}.
\end{equation}

Since $\varphi\in {\cD}^l_N(\bQ_p)$, it is clear that the
functions
$$
\begin{array}{rcl}
\displaystyle
|x|_p^{\alpha-1}\log_p^m|x|_p\big(\varphi(x)-\varphi(0)\big)\Delta_{l}(x)&=&0,
\medskip \\
\displaystyle
|x|_p^{\alpha-1}\log_p^m|x|_p\varphi(x)\big(1-\Delta_{l}(x)\big)&\in& {\cD}^l_N(\bQ_p).
\end{array}
$$
Thus for their Fourier transforms, according to (\ref{12.1}), we
have
\begin{equation}
\label{55}
\begin{array}{rcl}
\displaystyle
J_{\pi_{\alpha},m;\varphi}^{1}(t)&=&
\displaystyle
\int_{B_l}|x|_p^{\alpha-1}\log_p^m|x|_p\chi_p(xt)
\big(\varphi(x)-\varphi(0)\big)\,dx=0, \medskip \\
\displaystyle
J_{\pi_{\alpha},m;\varphi}^{2}(t)&=&
\displaystyle
\int_{\bQ_p\setminus B_l}
|x|_p^{\alpha-1}\log_p^m|x|_p\chi_p(xt)\varphi(x)\,dx=0, \\
\end{array}
\end{equation}
for all $|t|_p> p^{-l}$. Thus for $Re \alpha <0$ relations
(\ref{45.1}), (\ref{55}), (\ref{54}) imply
(\ref{49.0})--(\ref{49.1}).
\end{proof}

\subsection{The case $f_{\pi_{0};m}(x)=P\Big(\frac{\log_p^m|x|_p}{|x|_p}\Big)$,
$m=0,1,2,\dots$.}\label{s2.2}

\begin{theorem}
\label{th2-2}
Let $\varphi\in {\cD}^l_N(\bQ_p)$. Then the functional
$J_{\pi_{\alpha},m;\varphi}(t)$ has the following asymptotical
behavior:

{\rm (a)} If $m=0$ then
$$
J_{\pi_{0},0;\varphi}(t) =\Bigl\langle
P\Big(\frac{1}{|x|_p}\Big)\chi_p(xt), \varphi(x)\Bigr\rangle
\qquad\qquad\qquad\qquad\qquad
$$
\begin{equation}
\label{50*}
\qquad\quad
=\varphi(0)\bigg(-\frac{1}{p}-\Big(1-\frac{1}{p}\Big)
\log_p\Big(\frac{|t|_p}{p^{-l}}\Big)\bigg), \quad |t|_p > p^{-l},
\end{equation}
i.e., in the weak sense
\begin{equation}
\label{50.1*}
P\Big(\frac{1}{|x|_p}\Big)\chi_p(xt)
=\delta(x)\bigg(-\frac{1}{p}-\Big(1-\frac{1}{p}\Big)
\log_p\Big(\frac{|t|_p}{p^{-l}}\Big)\bigg), \quad |t|_p \to
\infty.
\end{equation}

{\rm (b)} If $m=1,2,\dots$ then
$$
J_{\pi_{0},m;\varphi}(t) =\Bigl\langle
P\Big(\frac{\log_p^m|x|_p}{|x|_p}\Big)\chi_p(xt),
\varphi(x)\Bigr\rangle \qquad\qquad\qquad\qquad
$$
$$
=\varphi(0)\bigg\{\frac{1}{p}(-1)^{m+1}(\log_p|t|_p-1)^m
\qquad\qquad
$$
$$
+\Big(1-\frac{1}{p}\Big)
\frac{1}{m+1}\Big((-1)^{m+1}\big(\log_p^{m+1}|t|_p-(-l)^{m+1}\big)
$$
$$
\qquad\qquad -(-1)^mC_{m+1}^1{\bf B}_{1}\big(\log_p^{m}|t|_p-(-l)^{m}\big)
$$
\begin{equation}
\label{50}
+\sum_{r=2}^{m}(-1)^{m+1-r}C_{m+1}^r{\bf B}_{r}
\big(\log_p^{m+1-r}|t|_p-(-l)^{m+1-r}\big)\Big)\bigg\}, \quad
|t|_p > p^{-l},
\end{equation}
where the Bernoulli numbers ${\bf B}_{r}$, $r=0,1,\dots,m$ are
defined by {\rm (\ref{131})}, i.e., in the weak sense,
$$
P\Big(\frac{\log_p^m|x|_p}{|x|_p}\Big)\chi_p(xt)
=\delta(x)\bigg\{-\frac{1}{p}\big(-\log_p|t|_p+1\big)^m
\qquad\qquad\qquad
$$
$$
+\Big(1-\frac{1}{p}\Big)
\frac{1}{m+1}\Big((-1)^{m+1}\big(\log_p^{m+1}|t|_p-(-l)^{m+1}\big)
$$
$$
\qquad\qquad
-(-1)^mC_{m+1}^1{\bf B}_{1}\big(\log_p^{m}|t|_p-(-l)^{m}\big)
$$
\begin{equation}
\label{50.1}
+\sum_{r=2}^{m}(-1)^{m+1-r}C_{m+1}^r{\bf B}_{r}
\big(\log_p^{m+1-r}|t|_p-(-l)^{m+1-r}\big)\Big)\bigg\},
\quad |t|_p \to \infty,
\end{equation}
with respect to an asymptotic sequence
$\{\log_p^{m+1-k}|t|_p:k=0,1,\dots,m+1\}$.

Thus for any $\varphi \in {\cD}(\bQ_p)$, relations {\rm (\ref{50*})},
{\rm (\ref{50})} are exact equalities for sufficiently big
$|t|_p>p^{-l}$, i.e., these asymptotical expansions are stable
with the stabilization parameter $s(\varphi)=p^{-l}$.
\end{theorem}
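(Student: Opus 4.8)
The plan is to follow exactly the same strategy as in the proof of Theorem~\ref{th2-1}, adapting it to the critical character $\pi_0(x)=|x|_p^{-1}$ where the distribution is defined via the principal-value regularization~(\ref{19.5}) rather than by analytic continuation. First I would split the functional, using $\varphi\in{\cD}^l_N(\bQ_p)$, into three pieces analogous to~(\ref{45.1}):
\begin{equation*}
J_{\pi_{0},m;\varphi}(t)=J^{1}_{\pi_{0},m;\varphi}(t)+J^{2}_{\pi_{0},m;\varphi}(t)+\varphi(0)J^{0}_{\pi_{0},m}(t),
\end{equation*}
where $J^1$ integrates $\frac{\log_p^m|x|_p}{|x|_p}(\varphi(x)-\varphi(0))$ over $B_l$, $J^2$ integrates $\frac{\log_p^m|x|_p}{|x|_p}\varphi(x)$ over $\bQ_p\setminus B_l$, and $J^0=\int_{B_l}\frac{\log_p^m|x|_p}{|x|_p}\chi_p(xt)\,dx$. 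Exactly as in~(\ref{55}), the parameter-of-constancy argument combined with Lemma~\ref{lem1} (the Fourier-transform criterion~(\ref{12.1})) should force $J^1=J^2=0$ for all $|t|_p>p^{-l}$, since the integrands are, respectively, identically zero on $B_l$ and a test function in ${\cD}^l_N(\bQ_p)$ whose Fourier transform vanishes on $|t|_p>p^{-l}$. Thus the entire contribution reduces to $\varphi(0)J^0_{\pi_0,m}(t)$, which is precisely the source of the stabilization property and the stabilization parameter $s(\varphi)=p^{-l}$.

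The main work is therefore the explicit evaluation of $J^0_{\pi_0,m}(t)$. Here I would write the ball $B_l$ as the disjoint union of spheres $S_\gamma$ with $\gamma\le l$, on each of which $|x|_p=p^\gamma$ is constant so that $\log_p^m|x|_p=\gamma^m$, and use the standard value of the character integral $\int_{S_\gamma}\chi_p(xt)\,dx$ (computable from the formulas of~\cite[IV]{Vl-V-Z} referenced after~(\ref{53})). For $|t|_p>p^{-l}$ this collapses the sum to finitely many surviving spheres, namely those with $p^\gamma\le |t|_p^{-1}$, plus a single boundary sphere contributing the $-\frac1p$-type term; summing $\gamma^m$ over the resulting finite range of $\gamma$ is where the power sums $\sum\gamma^{m}$, and hence the Bernoulli numbers ${\bf B}_r$ of~(\ref{131}) via Faulhaber's formula, enter. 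This is the computational heart of the theorem, and it is the step I expect to be the main obstacle: one must carefully track the endpoint $\gamma$ at which $\chi_p$ stops integrating to zero, get the binomial coefficients $C_{m+1}^r$ and the alternating signs $(-1)^{m+1-r}$ right, and reconcile the boundary term with the power-sum term to produce the closed form in~(\ref{50}). I would handle the case $m=0$ first as a warm-up, where the power sum is just counting spheres and directly yields the logarithmic expression $-\frac1p-(1-\frac1p)\log_p(|t|_p/p^{-l})$ of~(\ref{50*}); this both checks the method and exhibits the $\log_p|t|_p$ behavior absent in the noncritical case.

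Having obtained the exact formula for $J^0_{\pi_0,m}(t)$ valid for $|t|_p>p^{-l}$, relations~(\ref{50*}) and~(\ref{50}) follow immediately from $J_{\pi_0,m;\varphi}(t)=\varphi(0)J^0_{\pi_0,m}(t)$, and the weak-sense statements~(\ref{50.1*}),~(\ref{50.1}) follow by reading the equality against an arbitrary $\varphi$, so that the coefficient of each $\log_p^{m+1-k}|t|_p$ is proportional to $\delta(x)$ as in Definition~\ref{de3}. Finally, since the derivation of the exact equality used only $|t|_p>p^{-l}$ and the vanishing of $J^1,J^2$ there, the asymptotic expansion is stable in the sense of Definition~\ref{de4} with stabilization parameter $s(\varphi)=p^{-l}$, completing the proof. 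The only technical care beyond Theorem~\ref{th2-1} is that no analytic continuation is needed for $\pi_0$: the principal-value definition~(\ref{19.5}) already gives an absolutely convergent (after subtracting $\varphi(0)$) integral, so the three-term splitting is legitimate directly.
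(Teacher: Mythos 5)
Your overall architecture (split at $B_l$, kill $J^1$ and $J^2$ by the parameter-of-constancy argument and (\ref{12.1}), reduce everything to $\varphi(0)J^0$, evaluate $J^0$ sphere-by-sphere with Bernoulli-number power sums) is the paper's architecture, and the $J^1$, $J^2$ part of your argument is correct. But there is a genuine gap at the central object: your
$J^0_{\pi_0,m}(t)=\int_{B_l}\frac{\log_p^m|x|_p}{|x|_p}\,\chi_p(xt)\,dx$
is a \emph{divergent} integral, so your three-term splitting is not legitimate, and your closing claim that ``the principal-value definition (\ref{19.5}) already gives an absolutely convergent integral, so the three-term splitting is legitimate directly'' is exactly where the error lies. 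Indeed, on the sphere $S_\gamma$ one has $\int_{S_\gamma}\chi_p(xt)\,dx=(1-p^{-1})p^\gamma$ whenever $p^\gamma\le |t|_p^{-1}$, i.e.\ for all $\gamma\le -M$ (writing $|t|_p=p^M$), so these infinitely many spheres each contribute $p^{-\gamma}\gamma^m\cdot(1-p^{-1})p^\gamma=(1-p^{-1})\gamma^m$ to your $J^0$: the terms do not tend to zero and the series diverges. This is the critical-character phenomenon: $\int_{S_\gamma}|x|_p^{-1}\,dx=1-p^{-1}$ is independent of $\gamma$, so no decay is available, and unlike Theorem \ref{th2-1} there is no parameter $\alpha$ in which to analytically continue. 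Your description of the mechanism is also backwards: the spheres with $p^\gamma\le|t|_p^{-1}$ are \emph{infinitely many} and are precisely the ones that survive the oscillation; what the oscillation kills are the finitely many spheres with $\gamma\ge -M+2$.

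The fix --- and it is exactly what the paper does in (\ref{58.3}) --- is to keep the subtraction of (\ref{19.5}) attached to the $\varphi(0)$ term when splitting: write $\varphi(x)\chi_p(xt)-\varphi(0)=\chi_p(xt)\bigl(\varphi(x)-\varphi(0)\bigr)+\varphi(0)\bigl(\chi_p(xt)-1\bigr)$, so that
$$
J^0_{\pi_0,m}(t)=\int_{B_l}\frac{\log_p^m|x|_p}{|x|_p}\bigl(\chi_p(xt)-1\bigr)\,dx .
$$
Then the spheres with $\gamma\le-M$ contribute nothing (there $\chi_p(xt)\equiv1$), the boundary sphere $\gamma=-M+1$ produces the $-\frac1p(-M+1)^m$ term, and the finitely many spheres $-M+1\le\gamma\le l$ produce $-(1-\frac1p)\sum_{\gamma=-M+1}^{l}\gamma^m$, which is where Proposition \ref{pr1} and Lemma \ref{lem5} (formulas (\ref{130}), (\ref{130*})) enter, exactly as in (\ref{60})--(\ref{61.1}). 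Note also that the $l$-dependence of the final answer --- the factor $p^{-l}$ inside the logarithm in (\ref{50*}) and the $(-l)^{m+1-r}$ terms in (\ref{50}) --- comes from the upper limit $\gamma=l$ of this finite sum; your divergent version of $J^0$ has no way to produce it.
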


\begin{proof}
According to (\ref{19.5}), we have
$$
J_{\pi_{0},m;\varphi}(t) =\Bigl\langle
P\Big(\frac{\log_p^m|x|_p}{|x|_p}\Big)\chi_p(xt),
\varphi(x)\Bigr\rangle \qquad\qquad\qquad\qquad\qquad\qquad
$$
$$
=\int_{B_0}\frac{\log_p^m|x|_p}{|x|_p}
\big(\varphi(x)\chi_p(xt)-\varphi(0)\big)\,dx
\qquad\qquad
$$
\begin{equation}
\label{57.1}
\qquad\qquad +\int_{\bQ_p\setminus B_0}
\frac{\log_p^m|x|_p}{|x|_p}\chi_p(xt)\varphi(x)\,dx,
\end{equation}
for all $\varphi\in {\cD}(\bQ_p)$, $m=0,1,2,\dots$.

Since $\varphi\in {\cD}^l_N(\bQ_p)$, it is {\it natural\/} to
rewrite the functional $J_{\pi_{0},m;\varphi}(t)$ in the form of
the sum of integrals:
\begin{equation}
\label{57}
J_{\pi_{0},m;\varphi}(t)
=J_{\pi_{0},m;\varphi}^{1}(t)+J_{\pi_{0},m;\varphi}^{2}(t)
+\varphi(0)J_{\pi_{0},m}^{0}(t),
\end{equation}
where
\begin{equation}
\label{58.1}
J_{\pi_{0},m;\varphi}^{1}(t)
=\int_{B_l}\frac{\log_p^m|x|_p}{|x|_p}\chi_p(xt)
\big(\varphi(x)-\varphi(0)\big)\,dx,
\end{equation}
\begin{equation}
\label{58.2}
J_{\pi_{0},m;\varphi}^{2}(t) =\int_{\bQ_p\setminus B_l}
\frac{\log_p^m|x|_p}{|x|_p}\chi_p(xt)\varphi(x)\,dx,
\end{equation}
\begin{equation}
\label{58.3}
J_{\pi_{0},m}^{0}(t)
=\int_{B_l}\frac{\log_p^m|x|_p}{|x|_p}\big(\chi_p(xt)-1\big)\,dx.
\end{equation}

Since $\varphi\in {\cD}^l_N(\bQ_p)$, it is clear that
$$
\begin{array}{rcl}
\displaystyle
\frac{\log_p^m|x|_p}{|x|_p}\big(\varphi(x)-\varphi(0)\big)\Delta_{l}(x)&=&0,
\medskip \\
\displaystyle
\frac{\log_p^m|x|_p}{|x|_p}\varphi(x)\big(1-\Delta_{l}(x)\big)&\in& {\cD}^l_N(\bQ_p).
\end{array}
$$
Thus as above, according to (\ref{12.1}), for their Fourier
transforms (\ref{58.1}), (\ref{58.2}) we have
\begin{equation}
\label{59}
J_{\pi_{0},m;\varphi}^{1}(t)=J_{\pi_{0},m;\varphi}^{2}(t)=0,
\quad \forall \, |t|_p> p^{-l}.
\end{equation}

Let us calculate integral (\ref{58.3}). Suppose that
$|t|_p=p^{M}$, \ $M >-l$.

We start with the case $m=0$. Taking into account that $-M+1\le
l$, according to formulas from ~\cite[Ch.II,\S2.2.]{G-Gr-P},~\cite[IV]{Vl-V-Z},
we have
$$
J_{\pi_{0},0}^{0}(t) =\int_{B_l}\frac{\chi_p(xt)-1}{|x|_p}\,dx
=\sum_{\gamma=-\infty}^{l}
p^{-\gamma}\int_{S_{\gamma}}\big(\chi_p(xt)-1\big)\,dx
$$
$$
\quad =-p^{-(-M+1)}p^{-M+1-1}
-\sum_{\gamma=-M+1}^{l}p^{-\gamma}\Big(1-\frac{1}{p}\Big)p^{\gamma}
$$
\begin{equation}
\label{60}
=-\frac{1}{p}-\Big(1-\frac{1}{p}\Big)(l+M)
=-\frac{1}{p}-\Big(1-\frac{1}{p}\Big)\log_p\Big(\frac{|t|_p}{p^{-l}}\Big).
\end{equation}
Relations (\ref{59}) and (\ref{60}) imply that
\begin{equation}
\label{62.1}
J_{\pi_{0},0;\varphi}(t)
=\varphi(0)\bigg(-\frac{1}{p}-\Big(1-\frac{1}{p}\Big)
\log_p\Big(\frac{|t|_p}{p^{-l}}\Big)\bigg), \quad |t|_p>p^{-l}.
\end{equation}
Note that the last relation can also be proved if we use the
representation of functional (\ref{57.1}) in the form of
convolution $J_{\pi_{0},m;\varphi}(t)
=F\big[P\big(\frac{1}{|x|_p}\big)\big](t)*F\big[\varphi(x)\big](t)$,
and formula~\cite[IX,(2.8)]{Vl-V-Z}:
$$
F\big[\big(1-p^{-1}\big)\log_p|x|_p\big](t)
=-P\Big(\frac{1}{|t|_p}\Big)-p^{-1}\delta(t).
$$

In the case $m=1,2,\dots$, for $|t|_p=p^{M}$, \ $M>-l$, using
formulas from~\cite[Ch.II,\S2.2.]{G-Gr-P},~\cite[IV]{Vl-V-Z}, we obtain
$$
J_{\pi_{0},m}^{0}(t)
=\int_{B_l}\frac{\log_p^m|x|_p}{|x|_p}\big(\chi_p(xt)-1\big)\,dx
=\sum_{\gamma=-\infty}^{l}
p^{-\gamma}\gamma^m\int_{S_{\gamma}}\big(\chi_p(xt)-1\big)\,dx
$$
$$
=-p^{-(-M+1)}(-M+1)^mp^{-M+1-1}
-\sum_{\gamma=-M+1}^{l}p^{-\gamma}\gamma^m\Big(1-\frac{1}{p}\Big)p^{\gamma}
$$
\begin{equation}
\label{61}
\qquad\qquad\qquad
=-\frac{1}{p}(-M+1)^m
-\Big(1-\frac{1}{p}\Big)\sum_{\gamma=-M+1}^{l}\gamma^m.
\end{equation}
Next, using formulas (\ref{130}), (\ref{130*}), relation
(\ref{61}) can be easily transformed to the following form
$$
J_{\pi_{0},m}^{0}(t)=-\frac{1}{p}(-M+1)^m
-\Big(1-\frac{1}{p}\Big)\big({\bf S}_{m}(l)-{\bf S}_{m}(-M)\big)
\qquad
$$
$$
=-\frac{1}{p}(-M+1)^m +\Big(1-\frac{1}{p}\Big)
\frac{1}{m+1}\Big((-M)^{m+1}-l^{m+1}
$$
$$
\qquad
-C_{m+1}^1{\bf B}_{1}\big((-M)^{m}-l^{m}\big)
+\sum_{r=2}^{m}C_{m+1}^r{\bf
B}_{r}\big((-M)^{m+1-r}-l^{m+1-r}\big)\Big)
$$
$$
=-\frac{1}{p}(-1)^m(\log_p|t|_p-1)^m
\qquad\qquad\qquad\qquad\qquad\qquad
$$
$$
+\Big(1-\frac{1}{p}\Big)
\frac{1}{m+1}\Big((-1)^{m+1}\big(\log_p^{m+1}|t|_p-\log_p^{m+1}p^{-l}\big)
$$
$$
\qquad\qquad
-(-1)^mC_{m+1}^1{\bf B}_{1}\big(\log_p^{m}|t|_p-\log_p^{m}p^{-l}\big)
$$
\begin{equation}
\label{61.1}
+\sum_{r=2}^{m}(-1)^{m+1-r}C_{m+1}^r{\bf B}_{r}
\big(\log_p^{m+1-r}|t|_p-\log_p^{m+1-r}p^{-l}\big)\Big),
\end{equation}
where the Bernoulli numbers ${\bf B}_{r}$, $r=0,1,\dots,m$ are
defined by (\ref{131}), the polynomial ${\bf S}_{m}(\gamma_0)$ is
given by (\ref{130}).

Relations (\ref{57}), (\ref{59}), (\ref{61.1}) imply
$$
J_{\pi_{0},m;\varphi}(t)
=\varphi(0)\bigg\{\frac{1}{p}(-1)^{m+1}(\log_p|t|_p-1)^m
\qquad\qquad\qquad
$$
$$
+\Big(1-\frac{1}{p}\Big)
\frac{1}{m+1}\Big((-1)^{m+1}\big(\log_p^{m+1}|t|_p-\log_p^{m+1}p^{-l}\big)
$$
$$
\qquad\qquad
-(-1)^mC_{m+1}^1{\bf B}_{1}\big(\log_p^{m}|t|_p-\log_p^{m}p^{-l}\big)
$$
\begin{equation}
\label{62}
+\sum_{r=2}^{m}(-1)^{m+1-r}C_{m+1}^r{\bf B}_{r}
\big(\log_p^{m+1-r}|t|_p-\log_p^{m+1-r}p^{-l}\big)\Big)\bigg\},
\end{equation}
for all $|t|_p> p^{-l}$.

Thus relations (\ref{50}), (\ref{50.1}) hold.
\end{proof}

\begin{corollary}
\label{cor1}
If $\alpha=1$ then relations {\rm (\ref{49.0})}, {\rm (\ref{25})}
imply the statement {\rm (\ref{12.1})} of Lemma~{\rm\cite[VII.2.]{Vl-V-Z}}.
\end{corollary}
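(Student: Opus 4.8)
The plan is to specialize $\alpha=1$ in (\ref{49.0}) and to recognize that the resulting identity is precisely the support half of the equivalence (\ref{12.1}). The two hypotheses of the corollary enter in complementary ways: (\ref{49.0}) supplies the closed form of the integral, while (\ref{25}) supplies the numerical value of the $\Gamma$-factor that makes it vanish.

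First I would observe that at $\alpha=1$ the singular integral degenerates into an ordinary Fourier transform. Since $|x|_p^{\alpha-1}\big|_{\alpha=1}=|x|_p^{0}\equiv 1$, the left-hand side of (\ref{49.0}) becomes
$$
J_{\pi_1,0;\varphi}(t)=\bigl\langle\chi_p(xt),\varphi(x)\bigr\rangle
=\int_{\bQ_p}\chi_p(xt)\varphi(x)\,dx=F[\varphi](t).
$$
Next I would evaluate the $\Gamma$-function (\ref{25}) at $\alpha=1$, obtaining $\Gamma_p(1)=\frac{1-p^{0}}{1-p^{-1}}=\frac{1-1}{1-p^{-1}}=0$. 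Substituting both facts into (\ref{49.0}) gives
$$
F[\varphi](t)=\varphi(0)\,\frac{\Gamma_p(1)}{|t|_p}=0,\qquad |t|_p>p^{-l}.
$$
Hence $\supp F[\varphi]\subseteq B_{-l}$, which is exactly the support requirement contained in the membership $F[\varphi]\in{\cD}^{-N}_{-l}(\bQ_p)$.

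It then remains to supply the two routine complements needed to recover the full statement (\ref{12.1}). The parameter-of-constancy bound $\ge -N$ for $F[\varphi]$ follows immediately from $\supp\varphi\subseteq B_N$: for $t'\in B_{-N}$ and $x\in B_N$ one has $|t'x|_p\le 1$, so $\chi_p(t'x)=1$ and therefore $F[\varphi](t+t')=F[\varphi](t)$. The converse implication in (\ref{12.1}) is then obtained by applying the very same argument to the function $F[\varphi]$ and invoking the inversion formula $F\big[F[\varphi]\big](x)=\varphi(-x)$, together with the symmetry $|{-x}|_p=|x|_p$ of the balls $B_N$.

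The only substantive point is the middle step: the vanishing $\Gamma_p(1)=0$ is precisely the mechanism that forces $F[\varphi]$ to be supported in $B_{-l}$, so that the analytic formula (\ref{49.0}) reproduces the purely combinatorial support estimate of Lemma~\ref{lem1}. Once this is isolated, I do not expect any genuine obstacle, since the surrounding bookkeeping about supports and parameters of constancy is elementary and already implicit in the derivation of (\ref{49.0}).
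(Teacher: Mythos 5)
Your proof is correct, and it is essentially the intended argument: the paper states this corollary without giving any proof, and the only substantive point is exactly the one you isolate, namely that at $\alpha=1$ relation (\ref{49.0}) reads $F[\varphi](t)=\varphi(0)\Gamma_p(1)/|t|_p$ while (\ref{25}) gives $\Gamma_p(1)=\frac{1-p^{0}}{1-p^{-1}}=0$, forcing $\supp F[\varphi]\subseteq B_{-l}$. Your two complements are also sound --- the constancy bound from $\chi_p(t'x)=1$ for $|t'|_p\le p^{-N}$, $|x|_p\le p^{N}$, and the converse direction of (\ref{12.1}) via $F\big[F[\varphi]\big](x)=\varphi(-x)$ --- so nothing is missing.
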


\begin{remark}
\label{rem2} \rm
Asymptotical expansion (\ref{49.1}) can be represented in the form
$$
|x|_p^{\alpha-1}\log_p^m|x|_p\chi_p(xt)
\qquad\qquad\qquad\qquad\qquad\qquad\qquad
$$
\begin{equation}
\label{49.1-11}
\qquad
=\delta(x)\frac{\log_p^{m}|t|_p}{|t|_p^{\alpha}}
\Big(\sum_{k=0}^{N}A_{k}(\alpha)\log_p^{-k}|t|_p+o\big(\log_p^{-N}|t|_p\big)\Big),
\quad |t|_p \to \infty,
\end{equation}
where $A_{k}(\alpha)$ is an explicit computable constant, $k=0,1,\dots$.
Here a {\em stabilization property} is expressed be the following assertion:
{\em for $N \ge m$ and for enough large $|t|_p$ the remainder disappears and
the asymptotic expansion turns to an exact equality}
~\footnote{We emphasize the representation (\ref{49.1-11}) after a remark
of the anonymous referee of this paper.}.

The same remark is also true for the case of asymptotical expansions (\ref{50.1}).
\end{remark}

\begin{remark}
\label{rem3} \rm
Since $\varphi\in {\cD}^l_N(\bQ_p)$, to calculate
asymptotics of the functionals $J_{\pi_{\alpha},m;\varphi}(t)$ and
$J_{\pi_{0},m;\varphi}(t)$ (for the case $\pi_1(x) \equiv 1$, $\alpha\ne0$),
it is {\it natural\/} to represent these functionals as the sums of
integrals (\ref{45.1}) and (\ref{57}), respectively. However, we
can represent these functionals as the sums of integrals
\begin{equation}
\label{46*}
{\widetilde J}_{\pi_{\alpha},m;\varphi}^{1}(t)
=\int_{B_{l_0}}|x|_p^{\alpha-1}(x)\log_p^m|x|_p\chi_p(xt)
\big(\varphi(x)-\varphi(0)\big)\,dx,
\end{equation}
\begin{equation}
\label{47*}
{\widetilde J}_{\pi_{\alpha},m;\varphi}^{2}(t)
=\int_{\bQ_p\setminus B_{l_0}}
|x|_p^{\alpha-1}\log_p^m|x|_p\chi_p(xt)\varphi(x)\,dx,
\end{equation}
\begin{equation}
\label{48*}
{\widetilde J}_{\pi_{\alpha},m}^{0}(t)
=\int_{B_{l_0}}|x|_p^{\alpha-1}\log_p^m|x|_p\chi_p(xt)\,dx,
\qquad\qquad
\end{equation}
where $l_0\in \bZ$. For example, we can choose $l_0=0$, as in the
standard representations (\ref{45}) and (\ref{57.1}). In this case
$$
\begin{array}{rcl}
\displaystyle
|x|_p^{\alpha-1}\log_p^m|x|_p\big(\varphi(x)-\varphi(0)\big)\Delta_{l_0}(x)
&\in& {\cD}^{\min(l,l_0)}_N(\bQ_p), \medskip \\
\displaystyle
|x|_p^{\alpha-1}\log_p^m|x|_p\varphi(x)\big(1-\Delta_{l_0}(x)\big)
&\in& {\cD}^{\min(l,l_0)}_N(\bQ_p).
\end{array}
$$
and, as above, according to (\ref{12.1}),
${\widetilde J}_{\pi_{\alpha},m;\varphi}^{1}(t)
={\widetilde J}_{\pi_{\alpha},m;\varphi}^{2}(t)=0$ for all
$|t|_p >p^{\max(-l,-l_0)}$. Thus repeating the above calculations
almost word for word, we obtain the asymptotic formulas from
Theorem~\ref{th2-1}. However, in this case the minimal
stabilization parameter is equal to $s(\varphi)=p^{\max(-l,-l_0)}$.

The same remark is also true for the case of Theorems~\ref{th2-2}.
\end{remark}

\section{Asymptotic formulas for singular Fourier integrals (the case $\pi_1(x) \not\equiv 1$)}
\label{s3}

Now we consider the case of distributions
$f_{\pi_{\alpha};m}(x)=|x|_p^{\alpha-1}\pi_{1}(x)\log_p^m|x|_p$, $m=0,1,2,\dots$.

\begin{theorem}
\label{th3}
Let $\varphi\in {\cD}^l_N(\bQ_p)$, and let $k_0>0$ be
the rank of the character $\pi_1(x)$. Then the functional
$J_{\pi_{\alpha},m;\varphi}(t)$ has the following asymptotical
behavior:

{\rm (a)} If $m=0$ then
\begin{equation}
\label{65}
J_{\pi_{\alpha},0;\varphi}(t) =\Bigl\langle
|x|_p^{\alpha-1}\pi_{1}(x)\chi_p(xt),\varphi(x)\Bigr\rangle
=\varphi(0)\frac{\Gamma_p(\pi_{\alpha})}{|t|_p^{\alpha}\pi_{1}(t)},
\quad |t|_p > p^{-l+k_0},
\end{equation}
for all $\varphi \in {\cD}^l_N(\bQ_p)$, where the
$\Gamma$-function $\Gamma_p(\pi_{\alpha})$ is given by {\rm (\ref{25.1})},
i.e., in the weak sense
\begin{equation}
\label{65.2}
|x|_p^{\alpha-1}\pi_{1}(x)\chi_p(xt)
=\delta(x)\frac{\Gamma_p(\pi_{\alpha})}{|t|_p^{\alpha}\pi_{1}(t)},
\quad |t|_p \to \infty.
\end{equation}

{\rm (b)} If $m=1,2,\dots$ then
$$
J_{\pi_{\alpha},m;\varphi}(t) =\Bigl\langle
|x|_p^{\alpha-1}\pi_{1}(x)\log_p^m|x|_p\chi_p(xt),
\varphi(x)\Bigr\rangle \qquad\qquad
$$
\begin{equation}
\label{65.1}
=\varphi(0)\sum_{k=0}^{m}C_{m}^{k}\log_p^{k}e\,
\frac{d^{k}\Gamma_p(\pi_{\alpha})}{d\alpha^{k}}
\frac{\log_p^{m-k}|t|_p}{|t|_p^{\alpha}\pi_{1}(t)},
\quad |t|_p >p^{-l+k_0},
\end{equation}
for all $\varphi \in {\cD}^l_N(\bQ_p)$, i.e., in the weak sense
$$
|x|_p^{\alpha-1}\pi_{1}(x)\log_p^m|x|_p\chi_p(xt)
\qquad\qquad\qquad\qquad\qquad\qquad\qquad
$$
\begin{equation}
\label{65.3}
=\delta(x)\sum_{k=0}^{m}C_{m}^{k}\log_p^{k}e\,
\frac{d^{k}\Gamma_p(\pi_{\alpha})}{d\alpha^{k}}
\frac{\log_p^{m-k}|t|_p}{|t|_p^{\alpha}\pi_{1}(t)},
\quad |t|_p \to \infty,
\end{equation}
with respect to an asymptotic sequence
$\{\pi_{\alpha+1}^{-1}(t)\log_p^{m-k}|t|_p:k=0,1,\dots,m\}$.

Thus for any $\varphi \in {\cD}(\bQ_p)$, relations {\rm
(\ref{65}), (\ref{65.1})} are exact equalities for sufficiently
big $|t|_p>p^{-l+k_0}$, i.e., these asymptotical expansions are
stable with the stabilization parameter $s(\varphi)=p^{-l+k_0}$.
\end{theorem}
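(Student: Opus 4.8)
The plan is to mirror the proof of Theorem~\ref{th2-1}, inserting the rank $k_0$ of the character $\pi_1$ at the two places where the threshold on $|t|_p$ is produced. First I would reduce the case $m\geq 1$ to the case $m=0$ by means of the identity $|x|_p^{\alpha-1}\pi_1(x)\log_p^m|x|_p=\log_p^m e\,\frac{d^m}{d\alpha^m}\bigl(|x|_p^{\alpha-1}\pi_1(x)\bigr)$, which is valid because $\pi_1$ is independent of $\alpha$; thus it suffices to compute the $m=0$ transform and then apply $\log_p^m e\,\frac{d^m}{d\alpha^m}$ together with the Leibniz rule, exactly as in the passage from (\ref{52}) to (\ref{53}). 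As for Theorem~\ref{th2-1}, I would first work in the strip where $|x|_p^{\alpha-1}\pi_1(x)\log_p^m|x|_p\varphi(x)\in\cL^1(\bQ_p)$ (the factor $\pi_1$ has modulus one and so does not affect integrability), split the integral at the ball $B_l$ as in (\ref{45.1}),
\begin{equation*}
J_{\pi_{\alpha},m;\varphi}(t)=J^{1}_{\pi_{\alpha},m;\varphi}(t)+J^{2}_{\pi_{\alpha},m;\varphi}(t)+\varphi(0)J^{0}_{\pi_{\alpha},m}(t),
\end{equation*}
and then extend the resulting identity to all admissible $\alpha$ by analytic continuation.

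The term $J^{1}$ vanishes identically, since $\varphi\in\cD^l_N(\bQ_p)$ is constant on $B_l$ and hence $\varphi(x)-\varphi(0)=0$ there. For $J^{2}=\int_{\bQ_p\setminus B_l}|x|_p^{\alpha-1}\pi_1(x)\log_p^m|x|_p\chi_p(xt)\varphi(x)\,dx$ I would view the integrand as $F[g](t)$ with $g(x)=|x|_p^{\alpha-1}\pi_1(x)\log_p^m|x|_p\varphi(x)\bigl(1-\Delta_l(x)\bigr)\in\cD(\bQ_p)$ and compute its parameter of constancy. On the support $\{p^{l+1}\leq|x|_p\leq p^{N}\}$ the factors $|x|_p^{\alpha-1}\log_p^m|x|_p$ and $\varphi$ are constant on balls of radius $\geq p^l$, whereas by Lemma~\ref{lem2} and (\ref{17}) the character $\pi_1$ is constant only on balls of radius $p^{\gamma-k_0}$ on the sphere $S_\gamma$, the worst case $\gamma=l+1$ giving radius $p^{l+1-k_0}$. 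Hence $g\in\cD^{l+1-k_0}_{N}(\bQ_p)$, so by (\ref{12.1}) its Fourier transform is supported in $B_{k_0-l-1}$, and therefore $J^{2}=0$ for $|t|_p>p^{-l+k_0}$. This is precisely the step at which the rank $k_0$ enlarges the stabilization parameter from $p^{-l}$ to $p^{-l+k_0}$.

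It remains to evaluate $J^{0}_{\pi_{\alpha},0}(t)=\int_{B_l}|x|_p^{\alpha-1}\pi_1(x)\chi_p(xt)\,dx$ and to show it stabilizes to the full transform $F[\pi_\alpha](t)$. Writing $F[\pi_\alpha](t)=J^{0}_{\pi_{\alpha},0}(t)+\sum_{\gamma\geq l+1}p^{\gamma(\alpha-1)}\int_{S_\gamma}\pi_1(x)\chi_p(xt)\,dx$, I would use the vanishing of the sphere (Gauss-sum) integrals $\int_{S_\gamma}\pi_1(x)\chi_p(xt)\,dx$ for a nontrivial character of rank $k_0$: they are nonzero only when $|t|_p=p^{k_0-\gamma}$, and for $\gamma\geq l+1$ this forces $|t|_p\leq p^{k_0-l-1}<p^{-l+k_0}$. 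Thus the tail sum vanishes and $J^{0}_{\pi_{\alpha},0}(t)=F[\pi_\alpha](t)$ for $|t|_p>p^{-l+k_0}$. By the multiplicative homogeneity $\pi_\alpha(tx)=\pi_\alpha(t)\pi_\alpha(x)$ (Definition~\ref{de1}), the change of variables $x=y/t$ gives $F[\pi_\alpha](t)=|t|_p^{-\alpha}\pi_1^{-1}(t)\,\Gamma_p(\pi_\alpha)$, with $\Gamma_p(\pi_\alpha)=F[\pi_\alpha](1)$ as in (\ref{25.1}). This proves (\ref{65}); applying $\log_p^m e\,\frac{d^m}{d\alpha^m}$ to $J^{0}_{\pi_{\alpha},0}$ and using the Leibniz rule (noting that $\pi_1(t)$ does not depend on $\alpha$) then yields (\ref{65.1}), while the weak-sense statements (\ref{65.2}), (\ref{65.3}) follow because $J^{1}=J^{2}=0$ leaves only the $\delta(x)$-term $\varphi(0)J^{0}$.

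The main obstacle is the precise control of the threshold $p^{-l+k_0}$, that is, the two rank-dependent vanishing statements above: the parameter-of-constancy count that kills $J^{2}$, and the vanishing of the sphere integrals in the tail of $J^{0}$. Both rest on Lemma~\ref{lem2} together with the explicit sphere computations collected in the auxiliary lemmas of Section~\ref{s5}; once these are available, the remaining work is the same Leibniz differentiation in $\alpha$ and analytic continuation already used in Theorem~\ref{th2-1}.
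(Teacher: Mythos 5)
Your proof is correct, but it takes a genuinely different route from the paper's. You adapt the spatial-side splitting used for Theorem~\ref{th2-1}: decompose at $B_l$ as in (\ref{45.1}), kill $J^{1}$ by local constancy of $\varphi$, kill $J^{2}$ by a parameter-of-constancy count that folds the rank $k_0$ into the estimate (via Lemma~\ref{lem2}) and then applies (\ref{12.1}), and identify $J^{0}$ with $F[\pi_\alpha](t)$ through the vanishing of the sphere integrals $\int_{S_\gamma}\pi_1(x)\chi_p(xt)\,dx$ for a ramified character except when $|t|_p=p^{k_0-\gamma}$. The paper instead works entirely on the Fourier side: it writes $J_{\pi_{\alpha},0;\varphi}=F[\pi_\alpha]*F[\varphi]$ (formula (\ref{66})), invokes the known closed form $F[|x|_p^{\alpha-1}\pi_1(x)](t)=\Gamma_p(\pi_\alpha)|t|_p^{-\alpha}\pi_1^{-1}(t)$ (formula (\ref{67}), quoted from \cite{Vl-V-Z}), and, since $F[\varphi]\in\cD^{-N}_{-l}(\bQ_p)$, reduces everything to the single factor $\Psi(t)=\int_{B_{-l}}\pi_1^{-1}\bigl(1-\frac{\xi}{t}\bigr)F[\varphi](\xi)\,d\xi$ of (\ref{69}), (\ref{70}); the rank enters exactly once, since $|t|_p>p^{-l+k_0}$ forces $\bigl|\frac{\xi}{t}\bigr|_p\le p^{-k_0}$ and hence $\pi_1^{-1}\bigl(1-\frac{\xi}{t}\bigr)\equiv 1$ by (\ref{17}), so $\Psi(t)=\varphi(0)$; the case $m\ge 1$ follows by differentiation in $\alpha$, exactly as in your reduction. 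Your route buys a treatment of Theorems~\ref{th2-1} and~\ref{th3} by one uniform method and makes the origin of the threshold $p^{-l+k_0}$ transparent in terms of supports and constancy radii (it even yields the marginally better threshold $|t|_p>p^{k_0-l-1}$); the paper's route is shorter, avoids Gauss sums altogether, and was chosen deliberately, as the authors say, to exhibit a second computational technique. One attribution slip to fix: the Gauss-sum vanishing you invoke is \emph{not} among the auxiliary lemmas of Section~\ref{s5} (Lemmas~\ref{lem3},~\ref{lem4} treat only $\pi_1\equiv 1$, and Proposition~\ref{pr1}, Lemma~\ref{lem5} concern Bernoulli sums); it must be imported from \cite{Vl-V-Z} or \cite{Taib3}, or proved directly by decomposing $S_0$ into balls of radius $p^{-k_0}$ on which $\pi_1$ is constant. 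That is a citation issue, not a mathematical gap.
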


\begin{proof}
Let $m=0$. Taking into account
formulas~\cite[VII,(3.10),(5.4)]{Vl-V-Z}, the functional
$J_{\pi_{\alpha},0;\varphi}(t)$ can be rewritten as a convolution:
$$
J_{\pi_{\alpha},0;\varphi}(t) =\bigl\langle
|x|_p^{\alpha-1}\pi_{1}(x)\chi_p(xt), \varphi(x)\bigr\rangle
\qquad\qquad\qquad\qquad\qquad\qquad\qquad
$$
\begin{equation}
\label{66}
=F[|x|_p^{\alpha-1}\pi_{1}(x)\varphi(x)](t)
=F[|x|_p^{\alpha-1}\pi_{1}(x)](t)*\psi(t),
\end{equation}
where $\psi(\xi)=F[\varphi(x)](\xi)$ and according
to~\cite[VIII,(2.1)]{Vl-V-Z},
\begin{equation}
\label{67}
F[|x|_p^{\alpha-1}\pi_{1}(x)](t)
=\Gamma_p(\pi_{\alpha})|t|_p^{-\alpha}\pi_{1}^{-1}(t).
\end{equation}

Since $\varphi(x)\in {\cD}^l_N(\bQ_p)$ then, in view of
(\ref{12.1}), $\psi(\xi)\in {\cD}^{-N}_{-l}(\bQ_p)$. If
$|t|_p>p^{-l}$, according to (\ref{67}), (\ref{11.1}), relation
(\ref{66}) can be rewritten as
\begin{equation}
\label{68}
J_{\pi_{\alpha},0;\varphi}(t)=\Gamma_p(\pi_{\alpha})
\int_{\bQ_p} |t-\xi|_p^{-\alpha}\pi_{1}^{-1}(t-\xi)\psi(\xi)\,d\xi.
\end{equation}
Since $|t|_p>p^{-l}$ and $\xi\in B_{-l}$ the last integral is well
defined for any $\alpha$. Moreover, we have $|t-\xi|_p=|t|_p$ for
$|t|_p>p^{-l}$, \ $\xi\in B_{-l}$. Thus relation (\ref{68}) can be
transformed to the form
\begin{equation}
\label{69}
J_{\pi_{\alpha},0;\varphi}(t)
=\Gamma_p(\pi_{\alpha})|t|_p^{-\alpha}\pi_{1}^{-1}(t)\Psi(t),
\quad |t|_p>p^{-l},
\end{equation}
where
\begin{equation}
\label{70}
\Psi(t)=\int_{B_{-l}}\pi_{1}^{-1}\Big(1-\frac{\xi}{t}\Big)\psi(\xi)\,d\xi.
\end{equation}

Let $k_0$ be the rank of the character $\pi_1(x)\not\equiv 1$. In
is clear that if $|t|_p>p^{-l+k_0}$ then the inequality
$\big|\frac{\xi}{t}\big|_p \le p^{-k_0}$ holds for all $\xi\in
B_{-l}$. Thus in view of (\ref{17}), we see that
$\pi_{1}^{-1}\big(1-\frac{\xi}{t}\big)\equiv 1$ for all $\xi\in
B_{-l}$ and $|t|_p>p^{-l+k_0}$. Next, applying an analog of the
Lebesque theorem to limiting passage under the sign of an integral
to (\ref{70}), and taking into account that $|\pi_1(x)|=1$ and
$$
\int_{B_{-l}}\psi(\xi)\,d\xi=\int_{\bQ_p}\psi(\xi)\,d\xi=\varphi(0),
$$
we see that (\ref{69}), (\ref{70}) imply relations (\ref{65}),
(\ref{65.2}) for all $|t|_p>p^{-l+k_0}$.

If $m=1,2,\dots$, differentiating (\ref{69})
with respect to $\alpha$, we obtain
$$
J_{\pi_{\alpha},m;\varphi}(t)
=\log_p^{m}\frac{d^{m}}{d\alpha^{m}}J_{\pi_{\alpha},0;\varphi}(t)
\qquad\quad\qquad\qquad\qquad\qquad\qquad\qquad\qquad
$$
\begin{equation}
\label{71}
=\sum_{k=0}^{m}C_{m}^{k}\log_p^{k}e\,\frac{d^{k}\Gamma_p(\pi_{\alpha})}{d\alpha^{k}}
|t|_p^{-\alpha}\log_p^{m-k}|t|_p\pi_{1}^{-1}(t)\Psi(t), \quad |t|_p>p^{-l}.
\end{equation}
Just as above, since $\Psi(t)=\varphi(0)$ for $|t|_p>p^{-l+k_0}$,
relation (\ref{71}) implies (\ref{65.1}), (\ref{65.3}).
\end{proof}

The analogues of Remarks~\ref{rem2},~\ref{rem3} are also true for
the case of Theorem~\ref{th3}.

\section{$p$-Adic version of the Erd\'elyi lemma}
\label{s4}

Theorems~\ref{th2-1},~\ref{th3} for $Re\alpha>0$ imply
the following $p$-adic version of the well known Erd\'elyi lemma.

\begin{corollary}
\label{cor2}
Let $k_0$ be the rank of the character
$\pi_1$, and $\varphi \in {\cD}^l_N(\bQ_p)$. Then
for $Re\alpha>0$, \ $m=0,1,2,\dots$, we have
$$
\int_{\bQ_p}|x|_p^{\alpha-1}\pi_{1}(x)\log_p^m|x|_p\chi_p(xt)
\varphi(x)\,dx
\qquad\qquad\qquad\qquad\qquad\qquad
$$
$$
\qquad\qquad
=\varphi(0)\sum_{k=0}^{m}C_{m}^{k}\log_p^{k}e\,
\frac{d^{k}\Gamma_p(\pi_{\alpha})}{d\alpha^{k}}
\frac{\log_p^{m-k}|t|_p}{|t|_p^{\alpha}\pi_{1}(t)}, \quad
|t|_p>p^{-l+k_0}.
$$
Moreover, for any $\varphi \in {\cD}(\bQ_p)$, the last relation is
a stable asymptotical expansion.
\end{corollary}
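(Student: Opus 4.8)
The plan is to obtain this statement as nothing more than a specialization of Theorems~\ref{th2-1} and~\ref{th3} to the half-plane $Re\,\alpha>0$, in which the singular Fourier integral no longer requires any distributional interpretation. Indeed, for $Re\,\alpha>0$ the integrand obeys the pointwise bound $\bigl||x|_p^{\alpha-1}\pi_1(x)\log_p^m|x|_p\varphi(x)\bigr|\le C\,|x|_p^{Re\,\alpha-1}\bigl|\log_p|x|_p\bigr|^m$, which is locally integrable near $x=0$ because $Re\,\alpha-1>-1$; together with the compactness of $\supp\varphi$ this shows $|x|_p^{\alpha-1}\pi_1(x)\log_p^m|x|_p\varphi(x)\in\cL^{1}(\bQ_p)$. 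Hence the left-hand side is a genuinely convergent Lebesgue integral, equal to $F\bigl[|x|_p^{\alpha-1}\pi_1(x)\log_p^m|x|_p\varphi\bigr](t)$, and no analytic continuation in $\alpha$ is invoked. This is exactly the regime in which a $p$-adic Erd\'elyi lemma should be stated.

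First I would dispatch the non-trivial-character case $\pi_1\not\equiv1$, where $k_0>0$: here the asserted identity is literally formula~(\ref{65.1}) of Theorem~\ref{th3}(b) (and~(\ref{65}) when $m=0$), an exact equality for $|t|_p>p^{-l+k_0}$, so nothing further is needed. I would then check that the \emph{same} formula collapses onto Theorem~\ref{th2-1} in the trivial-character case $\pi_1\equiv1$, whose rank is $k_0=0$. Since $\pi_1(t)\equiv1$ the twisting factor $\pi_1^{-1}(t)$ disappears, and comparing the defining integrals~(\ref{25.1}) and~(\ref{25}) gives $\Gamma_p(\pi_\alpha)=\Gamma_p(\alpha)$ when $\pi_1\equiv1$, whence $\frac{d^{k}\Gamma_p(\pi_\alpha)}{d\alpha^{k}}=\frac{d^{k}\Gamma_p(\alpha)}{d\alpha^{k}}$. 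The right-hand side therefore matches~(\ref{49}) term by term, and the threshold $p^{-l+k_0}$ reduces to $p^{-l}$, in agreement with Theorem~\ref{th2-1}. Thus the single expression with kernel $\Gamma_p(\pi_\alpha)$ and twist $\pi_1^{-1}(t)$ uniformly subsumes both theorems, with $m=0$ recovered by retaining only the $k=0$ term of the sum.

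Finally, for the stability claim over an arbitrary $\varphi\in\cD(\bQ_p)$, I would use the existence of a parameter of constancy (see~(\ref{9.3*})): every such $\varphi$ lies in some $\cD^{l}_{N}(\bQ_p)$, with $l$ its parameter of constancy and $B_N\supset\supp\varphi$. Applying the case just established with these $l,N$ produces an exact equality for all $|t|_p>p^{-l+k_0}$, which by Definition~\ref{de4} is precisely the assertion that the expansion is \emph{stable} with stabilization parameter $s(\varphi)=p^{-l+k_0}$, relative to the asymptotic sequence $\{\pi_{\alpha+1}^{-1}(t)\log_p^{m-k}|t|_p:k=0,1,\dots,m\}$ of Theorem~\ref{th3}. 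The only step demanding real verification is the identity $\Gamma_p(\pi_\alpha)=\Gamma_p(\alpha)$ for $\pi_1\equiv1$, which is immediate from~(\ref{25.1}) and~(\ref{25}); the remaining content is the bookkeeping that unifies the two character regimes into one formula, so I do not anticipate a substantive obstacle here.
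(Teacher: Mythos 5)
Your proposal is correct and follows exactly the paper's own route: the paper derives Corollary~\ref{cor2} simply as a direct consequence of Theorems~\ref{th2-1} and~\ref{th3} specialized to $Re\,\alpha>0$, which is precisely what you do. Your additional bookkeeping --- absolute convergence of the integral for $Re\,\alpha>0$, the identity $\Gamma_p(\pi_\alpha)=\Gamma_p(\alpha)$ and the collapse of the threshold $p^{-l+k_0}$ to $p^{-l}$ when $\pi_1\equiv 1$, and the reduction of stability to Definition~\ref{de4} via the parameter of constancy --- merely makes explicit what the paper leaves implicit.
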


\section{Some auxiliary lemmas}
\label{s5}

\begin{lemma}
\label{lem3}
Let $\varphi(x)\in {\cD}^l_N(\bQ_p)$ and
$\psi(t)=F\big[|x|_p^{\alpha}\varphi(x)\big](t)$, \ $Re \alpha >-1$ then
\begin{equation}
\label{26}
\psi(t)=\left\{
\begin{array}{rcl}
\in {\cD}^{-N}_{-l}(\bQ_p), \quad |t|_p&\le &p^{-l}, \medskip \\
\varphi(0)\Gamma_p(\alpha+1)\frac{1}{|t|_p^{\alpha+1}},
\quad |t|_p&>&p^{-l}, \\
\end{array}
\right.
\end{equation}
where $\Gamma_p(\alpha)$ is given by formula {\rm (\ref{25})}.
\end{lemma}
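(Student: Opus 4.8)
The plan is to compute the Fourier transform $\psi(t)=F\big[|x|_p^{\alpha}\varphi(x)\big](t)$ directly by exploiting the locally-constant structure of $\varphi$. The statement splits according to whether $|t|_p$ is small or large, and these two regimes should be handled by completely different arguments.

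First I would dispose of the case $|t|_p\le p^{-l}$. Since $\varphi\in {\cD}^l_N(\bQ_p)$, the function $g(x)=|x|_p^{\alpha}\varphi(x)$ is supported in $B_N$ and is locally constant with parameter of constancy $\ge l$ (the factor $|x|_p^{\alpha}$ is locally constant away from $0$, and at $0$ it is controlled because $\varphi$ is constant on $B_l\ni 0$, while for $Re\,\alpha>-1$ the singularity is integrable, so $g\in\cL^1(\bQ_p)$ and defines a genuine distribution). Hence $g\in {\cD}^l_N(\bQ_p)$, and by Lemma~\ref{lem1}, specifically the isomorphism property (\ref{12.1}), its Fourier transform lies in ${\cD}^{-N}_{-l}(\bQ_p)$. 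This is essentially immediate once one checks that $|x|_p^{\alpha}\varphi(x)$ really does inherit the constancy parameter $l$, which is the one small point requiring care.

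For the main case $|t|_p>p^{-l}$, the plan is to reduce the integral to a $p$-adic $\Gamma$-function. I would write $\psi(t)=\int |x|_p^{\alpha}\varphi(x)\chi_p(tx)\,dx$ and use the constancy $\varphi(x+x')=\varphi(x)$ for $x'\in B_l$ together with the vanishing of the additive character when integrated against it. The idea is that for $|t|_p>p^{-l}$ the character $\chi_p(tx)$ oscillates on each coset of $B_l$, so only the behavior near $x=0$ survives: on the region where $\varphi(x)=\varphi(0)$ one replaces $\varphi$ by $\varphi(0)$, and on the complementary annuli the local constancy of $\varphi$ plus the fact that $\int_{S_\gamma}\chi_p(tx)\,dx=0$ for $|t|_p$ large enough forces cancellation. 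What remains is $\varphi(0)\int |x|_p^{\alpha}\chi_p(tx)\,dx$, and by the scaling substitution $x\mapsto x/t$ this integral equals $|t|_p^{-(\alpha+1)}\int |y|_p^{\alpha}\chi_p(y)\,dy=\varphi(0)\Gamma_p(\alpha+1)|t|_p^{-(\alpha+1)}$, using the definition (\ref{25}) of $\Gamma_p$ with $\alpha$ replaced by $\alpha+1$ (and analytic continuation to handle $Re\,\alpha\le 0$, which is legitimate since both sides are meromorphic in $\alpha$).

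The main obstacle will be making the cancellation argument on the annuli fully rigorous while tracking the exact threshold $|t|_p>p^{-l}$: one must verify that the local constancy radius $l$ of $\varphi$ is precisely what guarantees $\int_{S_\gamma}\chi_p(tx)\varphi(x)\,dx$ collapses to the $\varphi(0)$-contribution for all relevant $\gamma$, and that no boundary term at the scale $|x|_p\sim p^{-l}$ is lost. An alternative, perhaps cleaner route avoiding the annulus bookkeeping is to split $\varphi(x)=\varphi(0)\Delta_l(x)+\big(\varphi(x)-\varphi(0)\Delta_l(x)\big)$: the second piece is a test function vanishing in a neighborhood of $0$ whose Fourier transform is again compactly supported by (\ref{12.1}) and vanishes for $|t|_p>p^{-l}$, while the first piece gives exactly $\varphi(0)F\big[|x|_p^{\alpha}\Delta_l(x)\big](t)$, which one evaluates by the scaling computation above. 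I would likely present this splitting version, since it isolates the essential $\Gamma_p$-integral cleanly and defers all the test-function vanishing to the already-established isomorphism (\ref{12.1}).
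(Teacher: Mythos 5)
Your handling of the case $|t|_p\le p^{-l}$ contains a genuine error. The function $g(x)=|x|_p^{\alpha}\varphi(x)$ is \emph{not} an element of ${\cD}^l_N(\bQ_p)$ (unless $\varphi(0)=0$ or $\alpha=0$): it fails to be locally constant at $x=0$. Indeed, on $B_l$ you have $\varphi\equiv\varphi(0)$, so $g(x)=\varphi(0)|x|_p^{\alpha}$ there, which takes the distinct values $\varphi(0)p^{\gamma\alpha}$ on the spheres $S_\gamma$, $\gamma\le l$, accumulating at $0$ (and for $-1<Re\,\alpha<0$ it is even unbounded); integrability and being a well-defined distribution are irrelevant to local constancy, which is what Lemma~\ref{lem1} requires. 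The claim is also self-refuting: if $g$ did lie in ${\cD}^l_N(\bQ_p)$, then by (\ref{12.1}) its Fourier transform would be supported in $B_{-l}$, hence $\psi(t)=0$ for all $|t|_p>p^{-l}$, contradicting the second line of (\ref{26}), which is nonzero whenever $\varphi(0)\ne 0$ and $\Gamma_p(\alpha+1)\ne 0$.

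The repair is exactly the splitting you yourself propose for the large-$|t|_p$ regime, and it is what the paper does for \emph{both} regimes at once: write $\psi=\varphi(0)F\big[|x|_p^{\alpha}\Delta_l(x)\big]+F\big[|x|_p^{\alpha}\big(\varphi(x)-\varphi(0)\Delta_l(x)\big)\big]$. The second piece lies in ${\cD}^{-N}_{-l}(\bQ_p)$ because $|x|_p^{\alpha}\big(1-\Delta_l(x)\big)$ \emph{is} locally constant with parameter $\ge l$ (ultrametricity gives $|x+x'|_p=|x|_p$ when $|x'|_p\le p^{l}<|x|_p$), so (\ref{12.1}) applies to it — but not to $g$ itself. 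The first piece is given by the explicit formula (\ref{29}): it equals $\frac{\Gamma_p(\alpha+1)}{|t|_p^{\alpha+1}}$ for $|t|_p>p^{-l}$, and equals the constant $\varphi(0)\frac{1-p^{-1}}{1-p^{-(\alpha+1)}}p^{l(\alpha+1)}$ on $B_{-l}$, i.e.\ a multiple of $\Delta_{-l}(t)$; summing the two pieces yields both lines of (\ref{26}), whereas your version loses this $\Delta_{-l}$-term. For the large-$|t|_p$ regime your argument is correct and essentially coincides with the paper's proof (the paper cites Vladimirov--Volovich--Zelenov for (\ref{29}) instead of rescaling); if you keep the scaling substitution, note that $y=tx$ produces the truncated integral $|t|_p^{-\alpha-1}\int_{|y|_p\le p^{l}|t|_p}|y|_p^{\alpha}\chi_p(y)\,dy$, which equals $\Gamma_p(\alpha+1)$ precisely because the spheres $S_\gamma$ with $\gamma\ge 2$ contribute zero once $p^{l}|t|_p\ge p$, i.e.\ $|t|_p>p^{-l}$ — no analytic continuation is needed when $Re\,\alpha>-1$.
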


\begin{proof}
Since $Re \alpha >-1$ the integral $\psi(t)$ converges absolutely.
We rewrite it as the sum $\psi(t)=\psi_1(t)+\psi_2(t)$, where
\begin{equation}
\label{27}
\psi_1(t)=\int_{B_{l}}\chi_p(t x)|x|_p^{\alpha}\varphi(x)\,dx, \quad
\psi_2(t)=\int_{\bQ_p\setminus B_{l}}\chi_p(tx)|x|_p^{\alpha}\varphi(x)\,dx.
\end{equation}

If $x\in \bQ_p\setminus B_{l}$ the function $|x|_p^{\alpha}$ has a
parameter of constancy $\ge l$, i.e., $|x|_p^{\alpha}\varphi(x)\in
{\cD}^l_N$. Hence according to (\ref{12.1}),
\begin{equation}
\label{28}
\psi_2(t)=F\big[|x|_p^{\alpha}(1-\Delta_{l}(x))\varphi(x)\big](t)
\in {\cD}^{-N}_{-l},
\end{equation}
i.e. $\psi_2(t)=0$ if $|t|_p>p^{-l}$.

Since $\varphi(x)\in {\cD}^l_N(\bQ_p)$, the function $\psi_1(t)$
can be rewritten as
$$
\psi_1(t)=\int_{B_{l}}\chi_p(tx)|x|_p^{\alpha}\,\varphi(x)\,dx
=\varphi(0)\int_{B_{l}}\chi_p(tx)|x|_p^{\alpha}\,dx.
$$
Next, according to~\cite[VII.2.,Example~9.]{Vl-V-Z} and
(\ref{25}), for $Re \alpha >-1$ we have
$$
F\big[|x|_p^{\alpha}\Delta_{l}(x)\big](t)
=\int_{B_{l}}\chi_p(tx)|x|_p^{\alpha}\,dx
\qquad\qquad\qquad\qquad\qquad\qquad\qquad
$$
\begin{equation}
\label{29}
=\frac{1-p^{-1}}{1-p^{-(\alpha+1)}}p^{l(\alpha+1)}\Delta_{-l}(t)
+\frac{\Gamma_p(\alpha+1)}{|t|_p^{\alpha+1}}\big(1-\Delta_{-l}(t)\big).
\end{equation}

To complete the proof of the lemma, it remains to use
(\ref{27})--(\ref{29}).
\end{proof}

\begin{lemma}
\label{lem4}
Let $\varphi(x)\in {\cD}^l_N(\bQ_p)$, \
$\psi(t)=F\big[|x|_p^{\alpha}\log_p^m|x|_p \varphi(x)\big](t)$,
$Re \alpha >-1$, $m=1,2,\dots$ then
\begin{equation}
\label{26.1}
\psi(t)=\left\{
\begin{array}{rcl}
\in {\cD}^{-N}_{-l}(\bQ_p), \quad |t|_p&\le &p^{-l}, \medskip \\
\varphi(0)\sum_{k=0}^{m}C_{m}^{k}\log_p^{m-k}e\,\frac{d^{m-k}\Gamma_p(\alpha+1)}{d\alpha^{m-k}}
\frac{\log_p^k|t|_p}{|t|_p^{\alpha+1}}, \quad |t|_p&>&p^{-l}. \\
\end{array}
\right.
\end{equation}
\end{lemma}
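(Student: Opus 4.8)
The plan is to reduce the statement to the case $m=0$, which is exactly Lemma~\ref{lem3}, by differentiating with respect to the parameter $\alpha$. The starting observation is the elementary identity
$$
\frac{d}{d\alpha}|x|_p^{\alpha}=\ln p\,\log_p|x|_p\,|x|_p^{\alpha},
$$
which, together with $\log_p e\cdot\ln p=1$, yields $|x|_p^{\alpha}\log_p^m|x|_p=\log_p^m e\,\frac{d^m}{d\alpha^m}|x|_p^{\alpha}$. Inserting this into the integral defining $\psi(t)$ and interchanging the Fourier transform with $\frac{d^m}{d\alpha^m}$, I would write
$$
\psi(t)=\log_p^m e\,\frac{d^m}{d\alpha^m}F\big[|x|_p^{\alpha}\varphi(x)\big](t),
$$
so that everything is governed by the two-branch formula (\ref{26}) of Lemma~\ref{lem3}.

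First I would justify the interchange of $\frac{d^m}{d\alpha^m}$ with the integral. Since $\varphi$ has compact support and $\mathrm{Re}\,\alpha>-1$, on a small complex neighborhood of any admissible $\alpha$ the integrand $\chi_p(tx)|x|_p^{\alpha}\varphi(x)$ and each of its $\alpha$-derivatives $\chi_p(tx)|x|_p^{\alpha}\log_p^{j}|x|_p\,\varphi(x)$, $j\le m$, are dominated by the single integrable function $C\,|x|_p^{\mathrm{Re}\,\alpha}\,\bigl|\log_p|x|_p\bigr|^{m}\mathbf{1}_{\supp\varphi}$, which is integrable near the origin precisely because $\mathrm{Re}\,\alpha>-1$. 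Hence differentiation under the integral sign is legitimate and I may substitute the explicit form of $F[|x|_p^{\alpha}\varphi]$ from Lemma~\ref{lem3}.

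Then I would treat the two regions separately. For $|t|_p>p^{-l}$, Lemma~\ref{lem3} gives the analytic expression $F[|x|_p^{\alpha}\varphi](t)=\varphi(0)\Gamma_p(\alpha+1)|t|_p^{-(\alpha+1)}$; applying the Leibniz rule to the product $\Gamma_p(\alpha+1)\cdot|t|_p^{-(\alpha+1)}$ and using that each $\alpha$-differentiation of $|t|_p^{-(\alpha+1)}$ brings down a factor proportional to $\log_p|t|_p$, one collects a finite sum in which $\frac{d^{m-k}\Gamma_p(\alpha+1)}{d\alpha^{m-k}}$ is paired with $\log_p^{k}|t|_p$, and absorbing the powers of $\ln p$ into $\log_p e$ (via $\log_p^m e\,(\ln p)^{k}=\log_p^{m-k}e$) reproduces the coefficients $C_m^k\log_p^{m-k}e$ of the second branch of (\ref{26.1}). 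For $|t|_p\le p^{-l}$, Lemma~\ref{lem3} says that $F[|x|_p^{\alpha}\varphi]$ is, for each $\alpha$, a function of $\cD^{-N}_{-l}(\bQ_p)$, i.e.\ supported in $B_{-l}$ with parameter of constancy $\ge -N$; since differentiation in the external parameter $\alpha$ preserves both the support in $t$ and the parameter of constancy, the $m$-th $\alpha$-derivative again lies in $\cD^{-N}_{-l}(\bQ_p)$, giving the first branch.

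I expect the only genuinely delicate point to be the combinatorial and sign bookkeeping in the Leibniz expansion: one must track exactly which power of $\log_p|t|_p$ is paired with which derivative of $\Gamma_p(\alpha+1)$ and how the constants $\log_p e$ and $\ln p$ recombine, so that the result matches (\ref{26.1}) term by term. An equivalent and perhaps cleaner route, paralleling the proof of Theorem~\ref{th2-1}, is to split $\psi=\psi_1+\psi_2$ as in Lemma~\ref{lem3}, observe by (\ref{12.1}) that the tail part $\psi_2$ vanishes for $|t|_p>p^{-l}$, and then differentiate the closed form (\ref{29}) for the central part $m$ times in $\alpha$; this simultaneously produces the same finite sum for $|t|_p>p^{-l}$ and the membership in $\cD^{-N}_{-l}(\bQ_p)$ for $|t|_p\le p^{-l}$.
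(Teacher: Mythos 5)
Your outline is, in substance, the paper's own proof. The paper establishes Lemma~\ref{lem4} by differentiating the identity (\ref{29}) $m$ times with respect to $\alpha$ to obtain (\ref{30}), and then ``repeating the constructions of Lemma~\ref{lem3} practically word for word'', i.e.\ splitting $\psi=\psi_1+\psi_2$ and using (\ref{12.1}) to kill the tail $\psi_2$ for $|t|_p>p^{-l}$ --- exactly the ``equivalent and perhaps cleaner route'' of your final paragraph. Your primary route (differentiating the two branches of (\ref{26}) under the integral sign, with the domination argument for $\mathrm{Re}\,\alpha>-1$ and the remark that pointwise differentiation in the parameter $\alpha$ preserves both support in $B_{-l}$ and parameter of constancy $\ge -N$) is the same reduction arranged in a slightly different order, and those supporting arguments are sound.

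The genuine problem is the step you yourself single out as delicate and then assert without computing: the Leibniz expansion does \emph{not} reproduce the coefficients of (\ref{26.1}). Since $|t|_p^{-(\alpha+1)}=e^{-(\alpha+1)\ln|t|_p}$, each differentiation in $\alpha$ brings down the factor $-\ln p\,\log_p|t|_p$, with a minus sign, so what the Leibniz rule actually yields for $|t|_p>p^{-l}$ is
\[
\psi(t)=\varphi(0)\sum_{k=0}^{m}(-1)^{k}\,C_{m}^{k}\,\log_p^{m-k}e\,
\frac{d^{m-k}\Gamma_p(\alpha+1)}{d\alpha^{m-k}}\,
\frac{\log_p^{k}|t|_p}{|t|_p^{\alpha+1}},
\]
i.e.\ (\ref{26.1}) with $\log_p^k|t|_p$ replaced by $\bigl(-\log_p|t|_p\bigr)^k$; the factor $(-1)^k$ cannot be absorbed into the constants $\log_p^{m-k}e$. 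Concretely, for $p=2$, $m=1$, $\alpha=1$, $\varphi=\Delta_0$ (so $l=0$, $\varphi(0)=1$) and $|t|_p=2$, direct summation over spheres gives $\psi(t)=-\sum_{j\ge 1}j\,2^{-2j-1}=-2/9$, which equals the sign-corrected sum $-5/9+1/3$, whereas (\ref{26.1}) as printed gives $-5/9-1/3=-8/9$. The same sign is missing from the paper's own formulas (\ref{30}) and (\ref{53}) (and hence from (\ref{49}), (\ref{49.1})), so your sketch is faithful to the paper's computation; but if carried out honestly your method proves the sign-corrected statement rather than (\ref{26.1}) as written, and a complete write-up must either carry the $(-1)^k$ through or restate the lemma accordingly.
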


\begin{proof}
Since $Re \alpha>-1$, by differentiating the identity (\ref{29})
with respect to~$\alpha$, we derive the following identity:
$$
F\big[|x|_p^{\alpha}\log_p^m|x|_p\Delta_{l}(x)\big](t)
=\int_{B_{l}}\chi_p(tx)|x|_p^{\alpha}\log_p^m|x|_p\,dx
\qquad\qquad\qquad\qquad
$$
$$
=\big(1-p^{-1}\big)\log_p^m e \,\frac{d^m}{d\alpha^m}\bigg(
\frac{p^{l(\alpha+1)}}{1-p^{-(\alpha+1)}}\bigg)\Delta_{-l}(t)
\qquad
$$
\begin{equation}
\label{30}
+\big(1-\Delta_{-l}(t)\big)
\sum_{k=0}^{m}C_{m}^{k}\log_p^{m-k} e\,\frac{d^{m-k}\Gamma_p(\alpha+1)}{d\alpha^{m-k}}
\frac{\log_p^k|t|_p}{|t|_p^{\alpha+1}},
\end{equation}
where $C_{m}^{k}$ are binomial coefficients, $\Gamma_p(\alpha)$ is
given by formula (\ref{25}).

Next, repeating the constructions of Lemma~\ref{lem3} practically
word for word, we obtain the proof of Lemma~\ref{lem4}.
\end{proof}

Now we introduce the well-known relation, which we shall need to calculate some integrals.

Recall that the Bernoulli numbers are defined by the following recurrence relation
\begin{equation}
\label{131}
{\bf B}_{0}=1, \qquad \sum_{r=0}^{\gamma-1}C_{\gamma}^r{\bf B}_{r}=0.
\end{equation}
In particular, ${\bf B}_{1}=-\frac{1}{2}$, \ ${\bf B}_{2j-1}=0$,
$j=2,3,\dots$, \ ${\bf B}_{2}=\frac{1}{6}$, \ ${\bf B}_{4}=-\frac{1}{30}$.

\begin{proposition}
\label{pr1}
{\rm(}see~{\rm\cite{Kud})}
$$
{\bf S}_{s}(\gamma_0)=\sum_{\gamma=1}^{\gamma_0}\gamma^s
=\frac{1}{s+1}\sum_{r=0}^{s}C_{s+1}^r{\bf B}_{r}\gamma_0^{s+1-r}+\gamma_0^{s}
\qquad\qquad\qquad\qquad\qquad\qquad\qquad
$$
\begin{equation}
\label{130}
=\frac{1}{s+1}\Big(\gamma_0^{s+1}-C_{s+1}^1{\bf B}_{1}\gamma_0^{s}
+C_{s+1}^2{\bf B}_{2}\gamma_0^{s-1}+\cdots+C_{s+1}^s{\bf B}_{s}\gamma_0\Big),
\quad \gamma_0\ge 1,
\end{equation}
where ${\bf B}_{r}$ are the Bernoulli numbers, $r=0,1,\dots,s$, \ $s=0,1,2,\dots$.
\end{proposition}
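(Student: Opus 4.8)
The statement to prove is the classical closed-form expression for the power sum $\mathbf{S}_s(\gamma_0)=\sum_{\gamma=1}^{\gamma_0}\gamma^s$ in terms of Bernoulli numbers. The plan is to proceed by a telescoping/generating-function argument rather than by induction on $s$, since the Bernoulli numbers are most naturally tied to the expansion that makes the telescoping work. First I would introduce the exponential generating function $\sum_{s\ge 0}\mathbf{S}_s(\gamma_0)\frac{u^s}{s!}=\sum_{\gamma=1}^{\gamma_0}e^{\gamma u}=\frac{e^{(\gamma_0+1)u}-e^u}{e^u-1}$, summing the finite geometric series in $\gamma$. The key is then to recognize $\frac{u}{e^u-1}=\sum_{r\ge 0}\mathbf{B}_r\frac{u^r}{r!}$ as the generating function of the Bernoulli numbers defined by the recurrence (\ref{131}); this identity is exactly equivalent to (\ref{131}), since multiplying $\sum_r \mathbf{B}_r u^r/r!$ by $(e^u-1)/u=\sum_{j\ge 0}u^j/(j+1)!$ and matching the coefficient of $u^{\gamma-1}$ reproduces the stated recurrence $\sum_{r=0}^{\gamma-1}C_\gamma^r\mathbf{B}_r=0$ together with $\mathbf{B}_0=1$.

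Next I would rewrite the generating function as
$$
\sum_{s\ge 0}\mathbf{S}_s(\gamma_0)\frac{u^s}{s!}
=\frac{e^{(\gamma_0+1)u}-e^{u}}{e^{u}-1}
=\Big(\frac{u}{e^u-1}\Big)\cdot\frac{e^{(\gamma_0+1)u}-e^{u}}{u},
$$
and extract the coefficient of $u^s/s!$ on both sides. The factor $\frac{e^{(\gamma_0+1)u}-e^u}{u}$ has the expansion $\sum_{j\ge 0}\big((\gamma_0+1)^{j+1}-1\big)\frac{u^j}{(j+1)!}$, so a Cauchy product against $\sum_r\mathbf{B}_r u^r/r!$ yields, after collecting the coefficient of $u^s$ and simplifying the binomial factors, a sum of the form $\frac{1}{s+1}\sum_{r=0}^{s}C_{s+1}^r\mathbf{B}_r\big((\gamma_0+1)^{s+1-r}-1\big)$. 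The remaining task is the bookkeeping: one expands $(\gamma_0+1)^{s+1-r}$ by the binomial theorem and uses the defining recurrence (\ref{131}) to collapse the resulting double sum into the clean single-variable polynomial $\frac{1}{s+1}\sum_{r=0}^s C_{s+1}^r\mathbf{B}_r\gamma_0^{s+1-r}+\gamma_0^s$ displayed in (\ref{130}). Writing out the first few terms explicitly and invoking $\mathbf{B}_1=-\tfrac12$ (and $\mathbf{B}_{2j-1}=0$ for $j\ge 2$) then gives the second displayed form in (\ref{130}).

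The main obstacle I anticipate is purely combinatorial rather than conceptual: matching the coefficient extraction from the generating-function product with the precise binomial normalization in (\ref{130}), in particular accounting correctly for the extra $\gamma_0^s$ term (which arises from the $r=s$ boundary contribution and the difference between summing up to $\gamma_0$ versus $\gamma_0-1$). An alternative route that sidesteps generating functions entirely is a direct induction on $\gamma_0$: one verifies the base case $\gamma_0=1$ and then checks that $\mathbf{S}_s(\gamma_0)-\mathbf{S}_s(\gamma_0-1)=\gamma_0^s$ by expanding the proposed polynomial formula and again invoking the Bernoulli recurrence (\ref{131}) to kill the unwanted terms. Either way the substantive input is the recurrence (\ref{131}), and the rest is routine algebra, so I would present the generating-function derivation as the conceptual skeleton and relegate the binomial verification to a short computation, noting that the result is classical and citing \cite{Kud} as the source.
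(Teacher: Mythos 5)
The first thing to note is that the paper itself contains \emph{no proof} of Proposition~\ref{pr1}: the formula is quoted as a classical fact with the citation~\cite{Kud}, and the only argument given in Sec.~\ref{s5} is the proof of Lemma~\ref{lem5}, which takes (\ref{130}) for granted and extends ${\bf S}_{s}(\gamma_0)$, viewed as a polynomial, to negative arguments. So your derivation is necessarily a different route from the paper, and it is correct. The generating-function skeleton goes through exactly as you describe: the geometric series gives $\sum_{s\ge 0}{\bf S}_{s}(\gamma_0)u^{s}/s!=(e^{(\gamma_0+1)u}-e^{u})/(e^{u}-1)$, the recurrence (\ref{131}) is equivalent to $u/(e^{u}-1)=\sum_{r\ge 0}{\bf B}_{r}u^{r}/r!$, and the Cauchy product yields
$$
{\bf S}_{s}(\gamma_0)=\frac{1}{s+1}\sum_{r=0}^{s}C_{s+1}^{r}{\bf B}_{r}
\big((\gamma_0+1)^{s+1-r}-1\big).
$$
The ``bookkeeping'' step you defer is genuinely routine and works: expand $(\gamma_0+1)^{s+1-r}$ binomially, use $C_{s+1}^{r}C_{s+1-r}^{j}=C_{s+1}^{j}C_{s+1-j}^{r}$ to exchange the order of summation, and apply (\ref{131}) to each inner sum $\sum_{r=0}^{s-j}C_{s+1-j}^{r}{\bf B}_{r}$; this annihilates every power $\gamma_0^{j}$ with $1\le j\le s-1$ and leaves precisely the extra term $\gamma_0^{s}$ coming from $j=s$, where the inner sum is ${\bf B}_{0}=1$ and the prefactor is $C_{s+1}^{s}/(s+1)=1$. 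One caveat you should flag: at $s=0$ your pre-collapse formula gives the correct value ${\bf S}_{0}(\gamma_0)=\gamma_0$, whereas the first displayed expression in (\ref{130}) evaluates to $\gamma_0+1$; the paper's two displayed forms actually disagree at this degenerate index (the second one, and your formula, are the correct ones), so the collapse step should be asserted only for $s\ge 1$. Your fallback --- induction on $\gamma_0$ via the difference equation ${\bf S}_{s}(\gamma_0)-{\bf S}_{s}(\gamma_0-1)=\gamma_0^{s}$ --- is also sound, and it is closer in spirit to the coefficient manipulations the paper does carry out in its proof of Lemma~\ref{lem5}; the generating-function version buys a conceptual explanation of why Bernoulli numbers appear at all, at the cost of the binomial bookkeeping above.
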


One can consider the right-hand side of ${\bf S}_{s}(\gamma_0)$ as
a polynomial with respect to~$\gamma_0$.

\begin{lemma}
\label{lem5}
{\rm (~\cite{Al-Kh-Sh1},~\cite{Al-Kh-Sh5})}
If we consider ${\bf S}_{s}(\gamma_0)$ as a polynomial with respect
to~$\gamma_0$ then for $\gamma_0\le -1$, \ $s=0,1,2,\dots$, we have
\begin{equation}
\label{130*}
\sum_{\gamma=\gamma_0+1}^{0}\gamma^s=-{\bf S}_{s}(\gamma_0).
\end{equation}
\end{lemma}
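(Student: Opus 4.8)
The plan is to treat ${\bf S}_s(\gamma_0)$ as the polynomial in $\gamma_0$ furnished by the right-hand side of (\ref{130}), and to deduce the negative-argument formula from a single \emph{polynomial} difference equation together with a telescoping sum. The guiding observation is that no manipulation of the explicit coefficients $C_{s+1}^r{\bf B}_r$ is required: everything follows from two structural facts about the polynomial ${\bf S}_s$.

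First I would establish that, \emph{as an identity between polynomials in} $\gamma_0$,
\begin{equation*}
{\bf S}_s(\gamma_0)-{\bf S}_s(\gamma_0-1)=\gamma_0^s .
\end{equation*}
For every integer $\gamma_0=n\ge 1$ this is immediate from the definition ${\bf S}_s(n)=\sum_{\gamma=1}^{n}\gamma^s$ in Proposition~\ref{pr1}, since the difference of consecutive partial sums is $n^s$; for $n=1$ it reads ${\bf S}_s(1)-{\bf S}_s(0)=1=1^s$, once one notes that the expanded form in (\ref{130}) has no constant term, so that ${\bf S}_s(0)=0$. Thus the two polynomials ${\bf S}_s(\gamma_0)-{\bf S}_s(\gamma_0-1)$ and $\gamma_0^s$ agree at infinitely many points and hence coincide identically; in particular the identity is valid at the negative integer arguments relevant here.

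With this in hand the lemma follows by telescoping. For $\gamma_0\le -1$ I would apply the difference identity at $\gamma_0=k$ for each integer $k$ with $\gamma_0+1\le k\le 0$ and sum, obtaining
\begin{equation*}
\sum_{k=\gamma_0+1}^{0}k^s
=\sum_{k=\gamma_0+1}^{0}\bigl({\bf S}_s(k)-{\bf S}_s(k-1)\bigr)
={\bf S}_s(0)-{\bf S}_s(\gamma_0)=-{\bf S}_s(\gamma_0),
\end{equation*}
where ${\bf S}_s(0)=0$ is used once more. This is precisely (\ref{130*}).

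The only genuinely delicate point is the passage from the combinatorial identity, which a priori holds only for positive integer $\gamma_0$, to a \emph{polynomial} identity valid for all $\gamma_0$, and in particular for $\gamma_0\le -1$. I expect this to be the main obstacle, but it is resolved cleanly by the standard fact that two polynomials agreeing on infinitely many values are equal. If one prefers to avoid the difference equation altogether, the same result can be reached by induction on $|\gamma_0|$: the base case $\gamma_0=-1$ reduces to ${\bf S}_s(-1)=0$, and the inductive step again invokes the polynomial identity ${\bf S}_s(\gamma_0)-{\bf S}_s(\gamma_0-1)=\gamma_0^s$.
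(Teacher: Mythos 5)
Your proof is correct, and it takes a genuinely different route from the paper's. The paper's own proof is a direct coefficient computation: it rewrites $\sum_{\gamma=\gamma_0+1}^{0}\gamma^s$ as $(-1)^s\sum_{\gamma=1}^{-\gamma_0-1}\gamma^s$, substitutes the positive-argument formula (\ref{130}), expands the powers of $\gamma_0+1$, and then matches the coefficient of each power $\gamma_0^{s-j}$ against the corresponding coefficient of $-{\bf S}_s(\gamma_0)$, using the Bernoulli recurrence (\ref{131}) and the vanishing of the odd Bernoulli numbers to make the identification. You replace all of that with two structural facts --- the polynomial identity ${\bf S}_s(X)-{\bf S}_s(X-1)=X^s$, obtained from Proposition~\ref{pr1} together with the principle that two polynomials agreeing at infinitely many points coincide, and ${\bf S}_s(0)=0$ (no constant term in (\ref{130})) --- after which (\ref{130*}) falls out of a telescoping sum. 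Your argument is shorter, uses (\ref{131}) nowhere, and applies verbatim to any polynomial interpolating a partial-sum sequence, whereas the paper's computation is tied to (and explicitly exhibits) the Bernoulli structure of the Faulhaber polynomial; what the paper's route buys is an entirely explicit, coefficient-level verification, at the cost of a much heavier calculation. One small caveat concerns your parenthetical induction variant: the base case $\gamma_0=-1$ reads $0^s=-{\bf S}_s(-1)$, which is the assertion ${\bf S}_s(-1)=0$ only for $s\ge 1$; for $s=0$ one has ${\bf S}_0(\gamma_0)=\gamma_0$, hence ${\bf S}_0(-1)=-1$, matching the left-hand side $0^0=1$. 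Your main telescoping argument is unaffected by this, since there $X^s$ is handled as a polynomial ($X^0\equiv 1$).
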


\begin{proof}
To prove the lemma we rewrite the last sum by using relation
(\ref{130}) as
$$
\sum_{\gamma=\gamma_0+1}^{0}\gamma^s
=(-1)^s\sum_{\gamma=1}^{-\gamma_0-1}\gamma^s
=-\frac{1}{s+1}\Big((\gamma_0+1)^{s+1}-(-1)C_{s+1}^1{\bf B}_{1}(\gamma_0+1)^{s}
$$
\begin{equation}
\label{132} \quad +C_{s+1}^2{\bf B}_{2}(\gamma_0+1)^{s-1}+\cdots
+(-1)^sC_{s+1}^s{\bf B}_{s}(\gamma_0+1)\Big).
\end{equation}

Using (\ref{131}), it is easy to see that the coefficients of
$\gamma_0^{s+1}$, $\gamma_0^{s}$, $\gamma_0^{s-1}$ in the last sum
are equal to $1$, $C_{s+1}^1+C_{s+1}^1{\bf B}_{1}=-C_{s+1}^1{\bf B}_{1}$,
and $C_{s+1}^2+C_{s+1}^1{\bf B}_{1}C_{s}^1+C_{s+1}^2{\bf B}_{2}
=C_{s+1}^2{\bf B}_{2}$, respectively. Taking into account
relation (\ref{130}) and the relation ${\bf B}_{2k-1}=0$,
$j=2,3,\dots$, we calculate the coefficient of $\gamma_0^{s-j}$:
$$
C_{s+1}^{s-j}+C_{s+1}^1{\bf B}_{1}C_{s}^{s-j}
+C_{s+1}^2{\bf B}_{2}C_{s-1}^{s-j}-C_{s+1}^3{\bf B}_{3}C_{s-2}^{s-j}
\qquad\qquad\qquad
$$
$$
\qquad\qquad +\dots+(-1)^{j}C_{s+1}^{j}{\bf B}_{j}C_{s-j}^{s+1-j}
+(-1)^{j+1}C_{s+1}^{j+1}{\bf B}_{j+1}
$$
$$
\qquad\qquad\qquad =C_{s+1}^{j+1}\sum_{r=0}^{j}C_{j+1}^{r}{\bf B}_{r}
+C_{s+1}^{j+1}{\bf B}_{j+1}=C_{s+1}^{j+1}{\bf B}_{j+1},
$$
$j=2,3,\dots,s-1$. The coefficient of $\gamma_0^{0}$ is equal to
$\sum_{r=0}^{s}C_{s+1}^{r}{\bf B}_{r}=0$.
The lemma is thus proved.
\end{proof}

\bibliographystyle{amsplain}

\begin{thebibliography}{10}


\bibitem{Al-Kh-Sh1}
S.~Albeverio, A.~Yu.~Khrennikov, V.~M.~Shelkovich,
\textit{Associated homogeneous $p$-adic distributions},
J. Math. An. Appl., {\bf 313}, (2006), 64--83.

\bibitem{Al-Kh-Sh2}
S.~Albeverio, A.~Yu.~Khrennikov, V.~M.~Shelkovich,
\textit{Associated homogeneous $p$-adic generalized functions},
Dokl. Ross. Akad. Nauk, {\bf 393}, (2003), no.~3, 300--303.
English transl. in Russian Doklady Mathematics., {\bf 68},
(2003), no.~3, 354--357.

\bibitem{Al-Kh-Sh5}
S.~Albeverio, A.~Yu.~Khrennikov, V.~M.~Shelkovich,
\textit{Nonlinear problems in $p$-adic analysis: \ Algebras of
$p$-adic distributions},
Izvestia Akademii Nauk, Seria Math., {\bf 69}, (2005), no.~2,
(2005), 221--263.

\bibitem{Al-Kh-Sh6}
S.~Albeverio, A.~Yu.~Khrennikov, V.~M.~Shelkovich,
\textit{Harmonic analysis in the $p$-adic Lizorkin spaces: fractional operators,
pseudo-differential equations, $p$-adic wavelets, Tauberian theorems},
Journal of Fourier Analysis and Applications, Vol. 12, Issue 4, (2006), 393--425.

\bibitem{Av-Bik-Koz-O}
V.~A.~Avetisov, A.~H.~Bikulov, S.~V.~Kozyrev, V.~A.~Osipov,
\textit{$p$-Adic models of ultrametric diffusion constrained by
hierarchical energy landscapes},
J.\ Phys.\ A: Math.\ Gen., {\bf 12}, (2002), 177--189.

\bibitem{Arn-G-Var2}
V.~Arnold, S.~Gusein-Zade, A.~Varchenko,
\textit{Singularities of Differentiable Maps}, Vol. II.,
Birkh\"{a}user, Boston Basel Berlin, 1988.

\bibitem{Brich1}
Yu.~A.~Brychkov,
\textit{Asymptotical expansions of distributions}, I.,
Theor. Math. Phys., {\bf 5}, (1970), no~1, 98--109.

\bibitem{Brich}
Yu.~A.~Brychkov,
\textit{On asymptotical expansions of distributions},
Mathematical Notices, {\bf 12}, (1972), no.~2, 131--138.

\bibitem{Brich-Shirok1}
Yu.~A.~Brychkov, Yu.~M.~Shirokov,
\textit{On some limiting formulas for generalized functions},
Mathematical Notices, {\bf 2}, (1967), no.~1, 81--91.

\bibitem{Brich-Shirok2}
Yu.~A.~Brychkov, Yu.~M.~Shirokov,
\textit{On asymptotical behavior of Fourier transform},
Theor. Math. Phys., {\bf 4}, (1970), no~3, 301--309.

\bibitem{Erdelyi-1}
A.~Erd\'elyi,
\textit{Asymptotic representations of Fourier integrals and the
method of stationary phase},
J. Soc. Indust. Appl. Math., {\bf 3}, (1955), 17--27.

\bibitem{Erdelyi-2}
A.~Erd\'elyi,
\textit{Asymptotic expansions of Fourier integrals involving logarithmic
singularities},
J. Soc. Indust. Appl. Math., {\bf 4}, (1956), 38--47.

\bibitem{E-K2}
R.~Estrada, R.~P.~Kanwal,
{\em A distributional approach to asymptotics: Theory and applications.}
Second edition. Birkh\"{a}user Boston, Inc., Boston, 2002.

\bibitem{Fed}
M.~V.~Fedoryuk,
\textit{Asymptotics: Integrals and Series}, Moscow, Nauka (in
Russian), 1987.

\bibitem{G-Gr-P}
I.~M.~Gel$'$fand, M.~I.~Graev and I.~I.~Piatetskii-Shapiro,
\textit{Generalized Functions. vol 6: Representation Theory and
Automorphic Functions}. Nauka, Moscow, 1966.

\bibitem{Ge-Sh}
I.~M.~Gel$'$fand and G.~E.~Shilov,
\textit{Generalized Functions. vol 1: Properties and Operations}.
New York, Acad. Press, 1964.

\bibitem{Hor}
L.~H\"{o}rmander,
\textit{The Analysis of Linear Partial Defferential Operators}.
vol 1: Distribution Theory and Fourier Analysis. Springer-Verlag,
Berlin, Heidelberg, New York, Tokyo, 1983.

\bibitem{J}
D.~S.~Jones,
\textit{Generalized Functions},
McGraw-Hill, 1966.

\bibitem{Kh1}
A.~Khrennikov,
\textit{$p$-Adic Valued Distributions in Mathematical Physics}.
Kluwer Academic Publ., Dordrecht, 1994.

\bibitem{Kh2}
A.~Khrennikov,
\textit{Non-Archimedean Analysis: Quantum Paradoxes, Dynamical
Systems and Biological Models},
Kluwer Academic Publ., Dordrecht, 1997.

\bibitem{Kh3}
A.~Khrennikov,
\textit{Information Dynamics in Cognitive, Psyhological, Social
and Anomalous Phenomena},
Kluwer Academic Publ., Dordrecht, 2004.

\bibitem{Kh-Sh2}
A.~Yu.~Khrennikov, V.~M.~Shelkovich,
\textit{Distributional asymptotics and $p$-adic Tauberian and
Shannon-Kotelnikov theorems},
Asymptotical Analysis, {\bf 46}, no.~3, (2006).

\bibitem{Koch}
A.~N.~Kochubei,
\textit{Pseudo-differential equations and stochastics over
non-archimedean fields},
Marcel Dekker. Inc. New York, Basel, 2001.

\bibitem{Kud}
V.~A.~Kudryavzev,
\textit{Summation of the powers to the positive integers and the
Bernoulli numbers},
M.-L., ONTI NKTP USSR, 1936. (In Russian)

\bibitem{Seller}
A. Sellier,
{\em Asymptotic behaviour of a class of singular Fourier Integrals},
Math. Proc. Camb. Phil. Soc., {\bf 121} (1997), 359--384.

\bibitem{Sh1}
V.~M.~Shelkovich,
\textit{Associated and quasi associated homogeneous distributions
{\rm(}generalized functions{\rm)}},
J. Math. An. Appl., {\bf 338}, (2008), 48-70.

\bibitem{Shirok1}
Yu.~M.~Shirokov,
\textit{On microcovariance and micriocausality},
 Nucl. Phys., {\bf 46}, (1963), 617--638.

\bibitem{Taib3}
M.~H.~Taibleson,
\textit{Fourier analysis on local fields}. Princeton University
Press, Princeton, 1975.

\bibitem{Vl-D-Zav}
V.~S.~Vladimirov, Yu.~N.~Drozzinov, B.~I.~Zavyalov,
\textit{Tauberian theorems for generalized functions}, Kluwer
Academic Publishers, Dordrecht--Boston--London, 1988.

\bibitem{Vl-V-Z}
V.~S.~Vladimirov, I.~V.~Volovich and E.~I.~Zelenov,
\textit{$p$-Adic Analysis and Mathematical Physics}. World
Scientific, Singapore, 1994.

\bibitem{Vl-V1}
V.~S.~Vladimirov, I.~V.~Volovich,
\textit{$p$-Adic quantum mechanics},
Commun. Math. Phys., {\bf 123}, (1989), 659--676.

\bibitem{V1}
I.~V.~Volovich,
\textit{$p$-Adic space-time and the string theory},
Theor. Math. Phys., {\bf 71}, (1987), no.~3, 337--340.

\bibitem{V2} I.~V.~Volovich,
\textit{$p$-Adic string.}
Class. Quant. Grav., {\bf 4}, (1987), L83--L87.

\bibitem{Zasl}
 A.~I.~Zaslavsky,
\textit{Multidimensional analogue of Erd\'elyi lemma and the Radon
transform},
In E.~Quinot, M.~Cheney, P.~Kuchment (Eds.),
Tomography, Impedance Imaging, and Integral Geometry. Lecture in
Applied Mathematics, Vol.~30, 1994.


\end{thebibliography}

\end{document}